\newtheorem{thm}{Theorem}
\newtheorem{lem}{Lemma}
\newtheorem{defi}{Definition}
\newtheorem{rem}{Remark}
\Crefname{lem}{Lemma}{Lemma}
\Crefname{thm}{Theorem}{Theorem}
\begin{document}
\title{Single-Shot Decoding of Linear~Rate LDPC~Quantum~Codes with High~Performance}
\author{Nikolas~P.~Breuckmann
	and~Vivien~Londe%
	
\thanks{N. P. Breuckmann, University College London, \href{mailto:n.breuckmann@ucl.ac.uk}{n.breuckmann@ucl.ac.uk}}%
\thanks{V. Londe, Team SECRET, INRIA, \href{mailto:vivien.londe@inria.fr}{vivien.londe@inria.fr}}%
}

\maketitle

\begin{abstract}
We construct and analyze a family of low-density parity check (LDPC) quantum codes with a linear encoding rate, polynomial scaling distance and efficient decoding schemes.
The code family is based on tessellations of closed, four-dimensional, hyperbolic manifolds, as first suggested by Guth and Lubotzky.

The main contribution of this work is the construction of suitable manifolds via finite presentations of Coxeter groups, their linear representations over Galois fields and topological coverings.
We establish a lower bound on the encoding rate~k/n of~13/72 = 0.180... and we show that the bound is tight for the examples that we construct.

Numerical simulations give evidence that parallelizable decoding schemes of low computational complexity suffice to obtain high performance.
These decoding schemes can deal with syndrome noise, so that parity check measurements do not have to be repeated to decode. Our data is consistent with a threshold of around~4\% in the phenomenological noise model with syndrome noise in the single-shot regime.
\end{abstract}

\begin{IEEEkeywords}
Quantum codes, quantum error-correction, single-shot decoding, hyperbolic, quantum fault-tolerance, Coxeter groups, cellular automata, belief-propagation.
\end{IEEEkeywords}

\IEEEpeerreviewmaketitle

\section{Introduction}

\IEEEPARstart{Q}{uantum} systems are susceptible to noise, which provides a formidable challenge to designing functioning and scalable quantum computers.
Noise prevents us from building even more powerful computing devices known as random access machines.
These are computers operating on analog signals and it can be shown that they can solve \textsc{PSPACE}-complete problems in polynomial time~\cite{RAMpower}.
However, small errors can build up uncontrollably in any analog computer.
This makes it impossible to scale these types of devices when noise is present and control is imperfect.
Shor showed that quantum computers are fundamentally different from analog computers in this regard, by showing that quantum errors can be dealt with by encoding the state of the quantum computer into a quantum code~\cite{shor_code}.
The accumulation of small errors is controlled by periodically performing measurements on the redundant degrees of freedom of the quantum code, thereby discretizing the error, and using the outcome of the measurement to determine a recovery operation.

A framework for the construction of quantum codes is provided by algebraic topology:
any manifold supporting a tessellation can be turned into a quantum code via its homology.
Well-knonw examples are the toric code, which is derived from a square tessellation of a torus and the surface code, which corresponds to the square tessellation of a topological disk~\cite{BK_surface_code,FM_surface_code}.
Properties of the code such as number of physical qubits~$n$, number of encoded qubits~$k$ and the code distance~$d$ are determined by the geometrical and topological properties of the tessellated manifold.

In~\cite{fetaya_bound,delfosse_bound} it was shown that the parameters of homological codes derived from 2D manifolds (surfaces) necessarily obey the bound 
\begin{align}
	kd^2 \leq \text{const.} \times (\log k)^2 n.
\end{align}

In~\cite{zemor} the author asked whether it is generally true that parameters of homological codes will be constrained by the bound $kd^2\in n^{1+o(1)}$.
The work of Guth and Lubotzky~\cite{guth_lubotzky} answered this question in the negative, by showing that codes derived from tessellations of four-dimensional hyperbolic manifolds have a linear encoding rate $k \sim n$ and polynomially scaling distance $d\in \Theta(n^\epsilon)$.
Their work left open how to actually construct these codes.

In this paper we discuss several approaches to this problem and explicitely construct closed, hyperbolic 4-manifolds from which we derive quantum codes.
We show that the code family has an asymptotic encoding rate~$k/n$ lower bounded by~$13/72$.
For the construction we consider regular tessellations of hyperbolic space. We will focus on a particular tessellation by a four-dimensional regular polytope called the {\em 120-cell}.
This polytope owes its name to the fact that its three-dimensional boundary consists of 120 dodecahedra.
The advantage of considering regular tessellations is that they can be described by their groups of symmetry, called {\em Coxeter groups}.
The first construction is based on finite presentations, which has been previously used to construct 2D hyperbolic codes~\cite{hyperbolic2d}.
A disadvantage of this approach is that finding closed manifolds is computationally expensive.
This problem is overcome by considering faithful representations of the Coxeter groups as matrix groups with coefficients in the ring~$\mathbb{Z}[\phi]$, where~$\phi$ is the golden ratio.
We relate the process of compactifying the infinte hyperbolic space~$\mathbb{H}^4$ to an algebraic procedure in terms of the linear representation.
It turns out that under certain conditions the symmetry group of the compactified space has a simple and well-known structure, allowing us to derive a formula for the size of the quantum code.
In order to obtain more examples of smaller size we use finite coverings, allowing us to construct spaces with less symmetries compared to the group-based constructions.
Finally, we perform Monte Carlo simulations to determine the performance of these codes.
We consider a decoder based on cellular automata~\cite{DKLP} as well as a decoder based on a message-passing algorithm, called belief-propagation.
Both decoding procedures have the advantage that they can be implemented using very simple classical control and are highly parallelizable.
The simulation results suggest that even when measurements are subject to noise it is possible to decode without having to repeat the measurement (single-shot error correction).
Even more encouraging is that the performance is higher than currently favoured quantum error correcting schemes.
Our data is consistent with an asymptotic threshold of~$p=4\%$ in the phenomenological $X/Z$-flip noise model with syndrome noise $q=p$.
This performance including measurement errors is higher than for a family of LDPC codes with similar parameters called  hypergraph product codes when assuming perfect measurements~\cite{hp_numerical}.

\subsection{Previous work}

Quantum codes based on hyperbolic 4-manifolds were originally proposed in \cite{guth_lubotzky} where it was shown that they possess a linear encoding rate and polynomially growing distance.
In \cite{hastings_decoder} a local decoding scheme was proposed and it was shown that under this scheme logical errors are polynomially suppressed.
Single examples of 4D~hyperbolic codes were constructed in \cite{golden_codes} and~\cite{homological_codes_thesis}. 
Examples of hyperbolic 4-manifolds with small volume were constructed in~\cite{belolipetsky} and~\cite{conder2005}.

\subsection{Summary}

In \Cref{sec:definition} we review the homological construction of quantum codes and the results obtained in~\cite{guth_lubotzky}.
In \Cref{sec:construction} we introduce regular tessellations of four-dimensional, hyperbolic space and their associated groups of symmetries and we derive the lower bound on the encoded rate for homological codes derived from such tessellations.
We then discuss the construction of closed, four-dimensional hyperbolic manifolds supporting regular tessellations using finitely presented groups and linear representations.
The list of examples is extended by considering less symmetric manifolds which are obtained by finite coverings.
We conclude the section by discussing the constructed examples in more detail.
Finally, in \Cref{sec:performance} we introduce simple decoding schemes and perform numerical simulations to determine the performance of the constructed code family.

\section{Definition and Properties}\label{sec:definition}

\subsection{Quantum Codes from Tessellated Manifolds}\label{sec:homological_codes}

Throughout the paper we assume that the number of physical qubits is $n$ and that their states  form a Hilbert space $\mathcal{H} = (\mathbb{C}^2)^{\otimes n}$.
A {\em quantum code}~$\mathcal{C}$ is a subspace of~$\mathcal{H}$ of dimension $2^k$ that is interpreted as the Hilbert space of~$k$ logical qubits.
Due to interactions with the environment error operators are applied randomly on the physical state.
It is assumed that such error operators act locally, meaning that they only act non-trivially on a small number of physical qubits.

A convenient class of quantum codes are called {\em stabilizer codes} where the code space is the $+1$-eigenspace of all elements of a subgroup $S$ of the Pauli group $P = \langle X_i, Y_i, Z_i \mid i\in \{1,\dotsc , n\} \rangle$ .
If the stabilizer group $S$ can be generated by operators which act as either purely $X$ or $Z$ then we call it a {\em CSS stabilizer code}.
CSS codes are closely related to binary linear codes from classical coding theory.
Given two binary linear codes of size $n$ with parity check matrices~$H_X$ and~$H_Z$ we can define a CSS stabilizer code simply by taking each row~$r$ of~$H_X$~($H_Z$) and defining an operator which acts as~$X$~($Z$) on qubit~$i$ if $r_i = 1$ and as the identity~$I$ otherwise.
Note that for the $+1$-eigenspace of $S$ to be non-trivial it is necessary that all of its generators commute.
This is achieved by demanding that
\begin{align}\label{eqn:CSS_condition}
	H_X \cdot H_Z^T = 0 .
\end{align}
Random constructions, which are commonly used in the classical setting, will generally not satisfy this constraint.
One way to find suitable parity check matrices~$H_X$ and~$H_Z$ is by considering homology over~$\mathbb{F}_2$, the field with two elements:
Given a closed manifold~$M$ of dimension~$D$ tessellated by polytopes, let $C_0$ be the $\mathbb{F}_2$-vector space which is formally generated by all vertices of the tessellation.
Similarly, we define $C_i$ as the vector space formally generated by all $i$-dimensional constituents of the tessellation (edges, faces, 3-cells,...).
We can now define boundary operators $\partial_i : C_i \rightarrow C_{i-1}$.
As each~$C_i$ comes with a distinguished basis we will always consider~$\partial_i$ as an $\mathbb{F}_2$-matrix with entries~$(\partial_i)_{m,n}$ equal to~$1$ if and only if the $i-1$-dimensional cell with label~$n$ is attached to the $i$-dimensional cell with label~$m$.
The elements of~$C_i$ can be identified with subsets of $i$-cells.
Applying $\partial_i$ to such an element will map it onto a subset of $i-1$-cells.
As contributions from neighboring $i$-cells will cancel modulo~2, we obtain that the result is the boundary of the initial subset.
An important observation is the fact that boundaries do not have boundaries themselves, which is equivalent to $\partial_{i} \circ \partial_{i+1} = 0$ for all $i=1,\dotsc , D-1$.

To define a CSS code we can simply define $H_X = \partial_{i}$ and~$H_Z = \partial_{i+1}^T$.
By doing so we have essentially identified $i$-cells with qubits, $i-1$-cells with $X$-checks and $i+1$-cells with $Z$-checks.

An alternative view on this construction is given by considering the tessellation as a partially-ordered set (poset).
The elements of the poset are all cells of the tessellation, where cells~$x$ and~$y$ fulfill the relation $x\prec y$ if and only if~$x$ is a subcell of~$y$ of one dimension lower.
The poset can be visualized as a diagram, as illustrated in \Cref{fig:poset}, where cells are nodes with two nodes $x$ and $y$ connected by an edge if and only if $x\prec y$.
As only cells with dimension differing by~1 are related the poset diagram forms a $D+1$-partite graph, where each partition is given by cells of a fixed dimension.
Picking any three consecutive layers we obtain what is called the Tanner graph of a CSS code: the middle layer forming the set of qubits and the outer two layers forming $X$-checks and $Z$-checks, respectively.
We note that the dual tessellation has the same poset diagram with the levels in reverse order.
If $i$ is chosen to be the middle dimension then $X$ and $Z$ are related by duality.

\begin{figure}
	\centering
	\includegraphics[width=0.7\columnwidth]{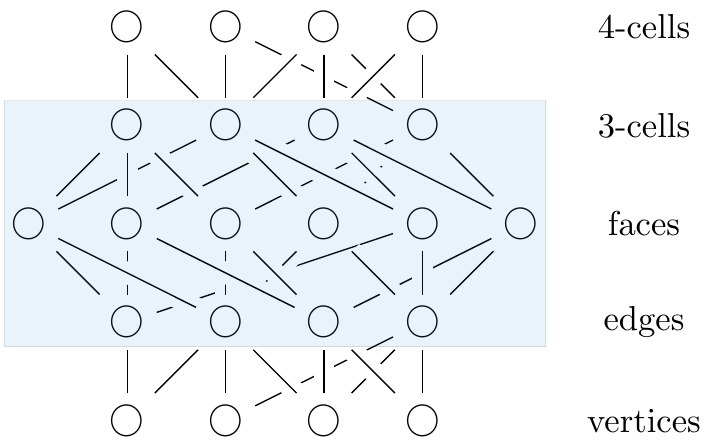}
	\caption{Poset diagram of a tessellation. The elements are cells and two related elements are connected by an edge. By definition of the relation (see main text) only cells with dimension differing by 1 are related. We can define a quantum code by picking three consecutive layers and define~$H_X$~($H_Z$) as the adjacency matrix between nodes in layers $i$ and $i-1$ ($i+1$). The box shows the case $i=2$. Note that any pair of an $i-1$-cell and an $i+1$-cell have an even number of $i$-cells that they are connected to in common, so that \Cref{eqn:CSS_condition} is satisfied. The subgraph in the box is the Tanner graph of the CSS code.}\label{fig:poset}
\end{figure}

The logical operators of a quantum code are characterized as those operators which commute with all checks while not being generated by them.
In particular, the logical $Z$-operators correspond to closed $i$-dimensional submanifolds which are not the boundary of an $i+1$-dimensional volume as they correspond to elements in~$\mathbb{F}_2^n$ which are in the kernel of the boundary operator, but not in its image.
They therefore correspond to elements of the {\em homology groups}~$H_i = \text{ker } \partial_{i} / \text{im } \partial_{i+1}$, where $i$-cells correspond to qubits.
Assuming that~$i$ is the middle dimension, the logical $X$-operators similarly correspond to closed $i$-dimensional submanifolds which are not the boundary of an $i+1$-dimensional volume in the dual tessellation.

A familiar example of this construction is the toric code which is obtained by a torus with a square tessellation.
The qubits are identified with edges ($i=1$) so that faces give $Z$-checks and vertices give $X$-checks.
The two non-contractible loops of the primal (dual) tessellation are identified with the logical $Z$ ($X$) operators.

It is common to be imprecise with the word {\em code}.
It can refer to a single instance, but also to a whole family of codes.
For our purposes here, a code family will be obtained from a sequence of manifolds with increasing volume which all come from the same tessellation, so that they all share the same local structure.

\subsection{Single-Shot Decoding}

Single-shot decoding was first discussed in~\cite{bombin_singleshot} in the context of the 3D gauge color code, although the results immediately apply to 4D homological codes as well.
The  main idea is that the syndrome, which is extracted by the measurement, contains redundancies.
This makes it possible to infer a recovery operation in the presence of syndrome noise, either by performing classical decoding on the syndrome first and then feed the fixed syndrome into the quantum code decoder.
Alternatively, it is known that cellular automata are robust against noise in the classical setting~\cite{grinstein} and there is numerical evidence that cellular automaton decoders applied to higher-dimensional quantum codes

Clearly, the recovery operation will in general not correct back to a code state and leave a residual error.
It is shown in~\cite{bombin_singleshot} that there exists a threshold below which a recovery is still possible by employing percolation type arguments to control the spread of errors.

\subsection{4D Hyperbolic Codes}\label{sec:guth_lubotzky_codes}

What makes the homological construction of the previous section appealing is that the properties of the code are determined by the underlying tessellated manifold.
In particular, the number of logical qubits $k$ is determined by its topology and the distance~$d$ is bounded by the minimum volume of a non-contractible submanifold.

We will now review the results of \cite{guth_lubotzky} on the encoding rate and distance of quantum codes derived from families of 4-dimensional hyperbolic manifolds.

\subsubsection{Encoding rate}

We will first discuss the number of logical operators $k$.
As mentioned in the introduction, hyperbolic manifolds give rise to quantum codes which have a linear rate $k\sim n$.
The linear rate of hyperbolic codes follows from the Chern--Gau\ss --Bonnet theorem, which relates the Euler characteristic
\begin{align}\label{eqn:euler_char}
	\chi(M) := \sum_{i=0}^{D} (-1)^i \dim H_i(M)
\end{align}
of a closed manifold $M$ of even dimension $D$ to the geometry of the manifold.
The exact statement is that
\begin{align}\label{eqn:chern_gauss_bonnet}
	\chi(M) = \frac{1}{(2\pi)^\frac{D}{2}} \int_{M} \text{Pf}(\Omega)
\end{align}
where $\text{Pf}(\Omega)$ is the Pfaffian of the curvature form of the Levi-Civita connection.
For a hyperbolic manifold the integral on the right-hand side is in fact equal to $(-1)^\frac{D}{2}\, 2\, \text{vol}(M) / \text{vol} (S^D)$~\cite{nakahara}.
Note that we always assume that~$M$ is connected, which implies that $\dim H_0 = \dim H_D = 1$.

For $D=2$ we can exactly solve for $\dim H_1$:
\begin{align}
	\dim H_1 = \frac{\text{area}(M)}{2\pi} + 2 
\end{align}
By tessellating $M$ with regular polygons we can define a quantum code with
\begin{align}
	k = \left(1-\frac{2}{r} -\frac{2}{s} \right)\, n + 2
\end{align}
where $r$ and $s$ are the weights of the $X$-checks and $Z$-checks~\cite{hyperbolic2d}.

For $D=4$ and $i=2$ we can not solve exactly for $k=\dim H_2$, since we do not know the dimensions of the odd homology groups.
However, as they both have a negative sign in the alternating sum we  obtain the lower bound
\begin{align}\label{eqn:volume_bound}
	\dim H_2 \geq 2 \frac{\text{vol}(M)}{\text{vol}(S^4)} - 2 .
\end{align}
Since $\text{vol}(S^4) = 8\pi^2 / 3$ this gives $\dim H_2 \geq 0.07\, \text{vol}(M) - 2$.
This establishes that a quantum code defined on a tessellation with uniform density of~$M$ will have linear rate $k\sim n$.
The value of the encoding rate $k/n$ will depend on the tessellation.
In \Cref{sec:encoding_rate} we derive a lower bound for the encoding rate of a quantum code based on a particular tessellation of 4D hyperbolic space.
This lower bound turns out to be tight for the examples we construct later (cf.~\Cref{sec:examples}).

\subsubsection{Distance}

For quantum codes derived from hyperbolic surfaces ($D=2$) one can establish upper and lower bounds on the distance which are logarithmic in the number of qubits~$n$.
For $D=4$ a lower bound on the distance follows from a result of systolic geometry by Anderson \cite{anderson}.
Let $R$ be the greatest length such that any ball of radius $R$ can be embedded anywhere in $M$.
This quantity is called the {\em injectivity radius of~$M$}.
Anderson's theorem states that any essential $i$-cycle $\gamma$ of~$M$ has its volume lower-bounded by the volume of a ball of radius~$R$ in $i$-dimensional hyperbolic space.
It is shown in~\cite{guth_lubotzky} that for a hyperbolic manifold $M$ we have $R \geq c\, \log \text{vol}(M)$ with a constant $c > 0$.
Combining this with Anderson's bound and the fact that the volume of a ball of radius~$r$ in~$\mathbb{H}^i$ grows like $\exp((i-1)\, r)$ we obtain that
\begin{align}
	\text{vol}(\gamma) \geq \text{vol}(B_R) = c'\, \exp\left( (i-1) R \right)
\end{align}
where $c'>0$ is a constant depending on $i$.
Hence we obtain for $i = 2$ that $\text{vol}(\gamma)$ is lower bounded by $c'\, \text{vol}(M)^{c}$.

\section{Construction and Examples}\label{sec:construction}

The discussion in \Cref{sec:definition} leaves open the question of how to obtain concrete examples of tessellations of closed hyperbolic 4-manifolds.
We will explain how we can describe tessellations using {\em Coxeter groups}, which are generated by reflections along all hyperplanes of symmetry of the tessellation.
We will review Coxeter groups in \Cref{sec:coxeter_groups}.
In particular, we will discuss how families of closed manifolds supporting a fixed tessellation are related to coverings of an infinite tessellation of $\mathbb{H}^4$.
We then give two separate constructions to obtain concrete examples of tessellated, closed, hyperbolic 4-manifolds as well as a method to obtain smaller, less symmetric manifolds from larger ones.

\subsection{Regular Tessellations}\label{sec:tessellations}

A {\em tessellation} is a gapless covering of a manifold by regular polytopes such that each adjacent pair of polytopes overlaps exactly on their facets.
We can decompose the regular polytopes into simplices by cutting them along their planes of symmetry.
We say that a tessellation is {\em regular} if the symmetry group of the tessellation operates transitively on these simplices.
This implies in particular that all polytopes are identical and that the same number of polytopes meet at every vertex, edge, face, etc.

Regular tessellations are classified by their Schl\"afli symbol $\{p,q,r,s,\dotsc\}$, which for a $D$-dimensional tessellation is a sequence of $D$~positive integers.
It encodes the incidence numbers of the cells:~q is the number of faces incident to a vertex in a 3-cell, r is the number of 3-cells incident to an edge in a 4-cell and s is the number of 4-cells incident to a face and so on.

Not every sequence of numbers corresponds to a valid tessellation of space due to geometric constraints.
For example, in 2D euclidean space the fundamental triangle of an~$\{r,s\}$ tessellation has internal angles $\pi/2$, $\pi/r$ and~$\pi/s$.
Since all internal angles have to add up to $\pi$ the only valid tessellations are the square tessellation $\{4,4\}$, the hexagonal tessellation~$\{6,3\}$ and the triangular tessellation~$\{3,6\}$.

The only possible regular tessellations of 4D hyperbolic space~$\mathbb{H}^4$ are:
\begin{enumerate}
	\item $\{5,3,3,5\}$ tessellation by 120-cells, self-dual
	\item $\{4,3,3,5\}$ tessellation by hypercubes
	\item $\{5,3,3,4\}$ tessellation by 120-cells, dual to 2
	\item $\{3,3,3,5\}$ tessellation by 4-simplices
	\item $\{5,3,3,3\}$ tessellation by 120-cells, dual to 4
\end{enumerate}
The 120-cell is a 4-dimensional regular polytope with Schl\"afli symbol~$\{5,3,3\}$ (see \Cref{fig:120cell}).
It has 120 dodecahedra~$\{5,3\}$ at its boundary.
Note that the dual tessellation has its Schl\"afli symbol reversed.
Compact 4-manifolds supporting the $\{5,3,3,3\}$ tessellation were constructed in~\cite{belolipetsky} and~\cite{conder2005}.
Quantum codes based on the $\{4,3,3,5\}$ tessellation were discussed in~\cite{golden_codes}. 

\begin{figure}
	\centering
	\subfloat[120-Cell]{\includegraphics[width=0.45\linewidth]{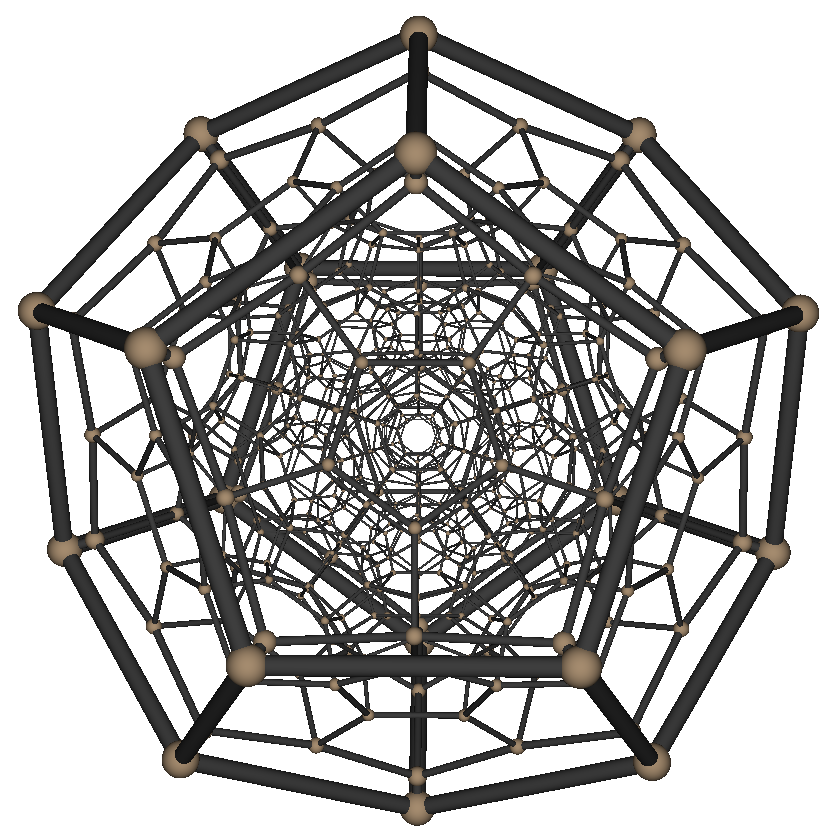}\label{fig:120cell}}
	\hfil
	\subfloat[Fundamental Simplex]{\includegraphics[width=0.45\linewidth]{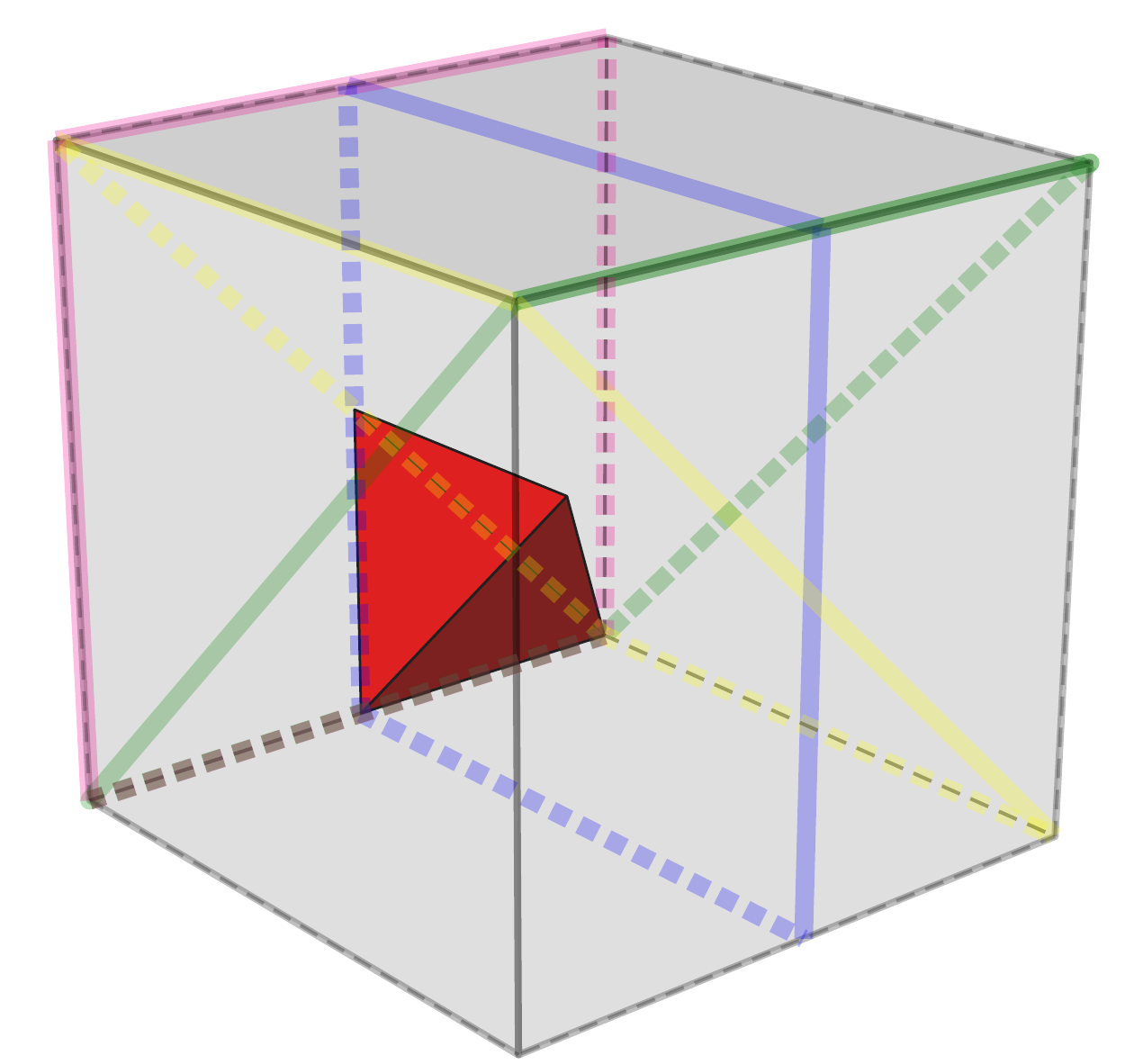}\label{fig:fundamental_domain}}
	\caption{(a) A 4D regular polytope called the 120-cell projected into 3D.
		 (b)~A single cube of a cubic tessellation $\{4,3,4\}$. The fundamental simplex is highlighted in red. It is bounded by the reflections~$a_0$, $a_1$, $a_2$ and~$a_3$, which are highlighted in blue, yellow, green and magenta, respectively. Each cube contributes 48 fundamental simplices.}
	\label{fig_sim}
\end{figure}

\subsection{Coxeter groups}\label{sec:coxeter_groups}
The group of symmetries of a regular tessellation is generated by reflections along hyperplanes of symmetry.
The hyperplanes of symmetry subdivide the tessellation into identical $D$-dimensional  simplices (see~\Cref{fig:fundamental_domain}).
The symmetry group acts freely and transitively on the simplices, meaning that no simplex is stabilized by the group action and every simplex can be mapped onto any other.
By fixing one arbitrary simplex and assigning it the identity element of the group, we have a one-to-one correspondence between the simplices and the group elements.

The Coxeter group is defined in terms of the generators and their relations.
As each generator $a_i$ corresponds to a reflection we have $a_i^2 = e$, where~$e$ is the neutral element of the group.
The relations between the generators are given by the Schl\"afli symbol
\begin{align}\label{eqn:coxeter_relations}
	(a_i\, a_j)^{r_{i,j}} = e
\end{align}
 where $r_{i,j}$ is the $j$th entry of the Schl\"afli symbol if $j = i+1$.
Note that the reflection relation gives $r_{i,i} = 1$.
All other pairs of generators (those with $|i-j|>1$) commute.
Since they are reflections this can be expressed as $(a_i\, a_j)^2=e$, i.e.~$r_{i,j} = 2$.

In the following chapters we will describe how we can use this description to obtain tessellations of compactifications of~$\mathbb{H}^4$.
Although the tools we present work for general tessellations, we will focus on the self-dual $\{5,3,3,5\}$ tessellation to construct quantum codes.

\subsection{Encoding Rate}\label{sec:encoding_rate}
Before discussing the constructions in the next few sections, we derive a lower bound on the encoding rate $k/n$ for codes derived from the $\{5,3,3,5\}$ tessellations introduced in \Cref{sec:tessellations}.

Instead of using the integral expression of the Chern-Gau\ss-Bonnet theorem of \Cref{eqn:chern_gauss_bonnet} we will instead consider the well-known combinatorial expression in terms of the number of cells in the tessellation.
In order to obtain this expression, we note that the number of $i$-cells is the same as the dimension of the vector space of $i$-chains~$C_i$. 
By the rank-nullity theorem and the definition of the homology groups $H_i = \text{ker } \partial_{i} / \text{im } \partial_{i+1}$ we have that 
\begin{align}
\begin{split}
\dim C_i &= \dim \ker \partial_i + \dim \text{im } \partial_i\\
&= \dim H_i + \dim \text{im } \partial_{i+1}  + \dim \text{im } \partial_{i}.
\end{split}
\end{align}
Putting this into the definition of the Euler charactistic (\Cref{eqn:euler_char}) we obtain
\begin{align}
\begin{split}
\chi = \sum_{i=0}^{D} (-1)^i \dim C_i = \sum_{i=0}^{D} (-1)^i \text{ \# i-cells}.
\end{split}
\end{align}
The number of cells can be expressed in terms of of the number of fundamental simplicies.
For the $\{5,3,3,5\}$ tessellation the number of fundamental simplices per vertex and 120-cell is both 14400, the number of fundamental simplicies per face is 100 and the number of fundamental simplices per edge and dodecahedron is both 240.
Let $S(M)$ be the total number of fundamental simplices of the tessellated manifold~$M$.
We obtain the following formula for the Euler characteristic:
\begin{align}
    \chi     &= \frac{13}{7200} S(M)
\end{align}
Together with \Cref{eqn:euler_char} we finally obtain the bound
\begin{align}\label{eqn:encoding_rate}
k \geq \frac{13}{72} n - 2
\end{align}
where the inequality is due to ignoring the negative contributions of $\dim H_1(M)$ and $\dim H_3(M)$. The constant term comes from $\dim H_0(M) = \dim H_4(M) = 1$.

We note that \Cref{eqn:volume_bound} and \Cref{eqn:encoding_rate} are consistent with one another, as the volume of a 4D hyperbolic manifold~$M$ is related its Euler characteristic via the equation $\text{vol}(M) = 4\pi^2 \chi(M)/3$ (see~\cite{gromov1982volume}).

\subsection{Construction based on FP-groups}\label{sec:constructionRWS}

We can use the identification between the fundamental simplices and the group elements to obtain tessellations of closed manifolds.
The idea is to consider finite quotients of the infinite group, which leave the local structure of the group invariant.
Geometrically, the procedure essentially consists of finding translations and identifying points which differ by these translations.
For example, on the 2D euclidean plane we can take an arbitrary translation and by identifying all points differing by this translation we obtain a cylinder of infinite length.
Taking a second translation, which is not co-linear with the first one, we obtain a torus.

This process is less straightforward in curved spaces where translations generally do not commute.
In \cite{hyperbolic2d} this has been done for 2D hyperbolic surfaces by enumerating normal subgroups and their quotients up to a certain size.
The Todd-Coxeter algorithm can be used to enumerate normal subgroups. Some faster adaptions using the Knuth–Bendix completion algorithm are also known (see Chapter~5.6 in~\cite{sims}).

By \Cref{eqn:coxeter_relations}, the group of the infinite $\{5,3,3,5\}$ tessellation is
\begin{align}
\begin{split}
	\langle a, b, c, d, e \mid& a^2, (ab)^5, (ac)^2, (ad)^2, (ae)^2,b^2, (bc)^3,\\ & (bd)^2, (be)^2, c^2, (cd)^3, (ce)^2, d^2, (de)^5, e^2 \rangle .
\end{split}
\end{align}
For readability we have written the generators as $a,\dotsc , e$ instead of $a_i$ for $i=0,\dotsc ,4$.
Trying to find normal subgroups of this group by exhaustive search yielded only two examples in a reasonable amount of time.
One example has 14,400 fundamental simplices and the other 72,000.
In the $\{5,3,3,5\}$ tessellation there are 100 simplices per face, so that we obtain quantum codes with~144 and 720 physicsl qubits, respectively (see~\Cref{tab:examples} and discussion in \Cref{sec:examples}).

We found larger examples by considering the following randomized procedure: we can take a random word in the generators of a specified length~$w$.  We then obtain a normal subgroup by taking its normal closure~$N$ of this group element and check if the resulting group is finite.
One additionally needs to check that $N$ operates fixed-point free which is the case if $w$ does not correspond to a reflection or a rotation~\cite{ratcliffe}.
This procedure gave two more examples with 18,432 and 19,584 physical qubits.

\subsection{Construction Based on Matrix Representations}\label{sec:constructionPSL}

The second construction is based on matrix representations of the symmetry groups.
The main idea is to obtain a faithful matrix representation of the infinite tessellation.
Let us assume that we are able to find a representation with a distinguished basis such that all of the generators and their inverses are mapped onto matrices which have integer entries.
Clearly, in this case all group elements are represented by integer matrices.
To obtain a finite group we could naively try to reduce the entries of all matrices modulo some positive integer~$p$.
This would ensure that we are left with a finite set of matrices.
There are some obvious problems with this approach:
It is generally not possible to have purely integer entries.
We will address these issues in what follows.

\subsubsection{Hyperboloid Model}
\begin{figure}
	\centering
	\includegraphics[width=0.9\columnwidth]{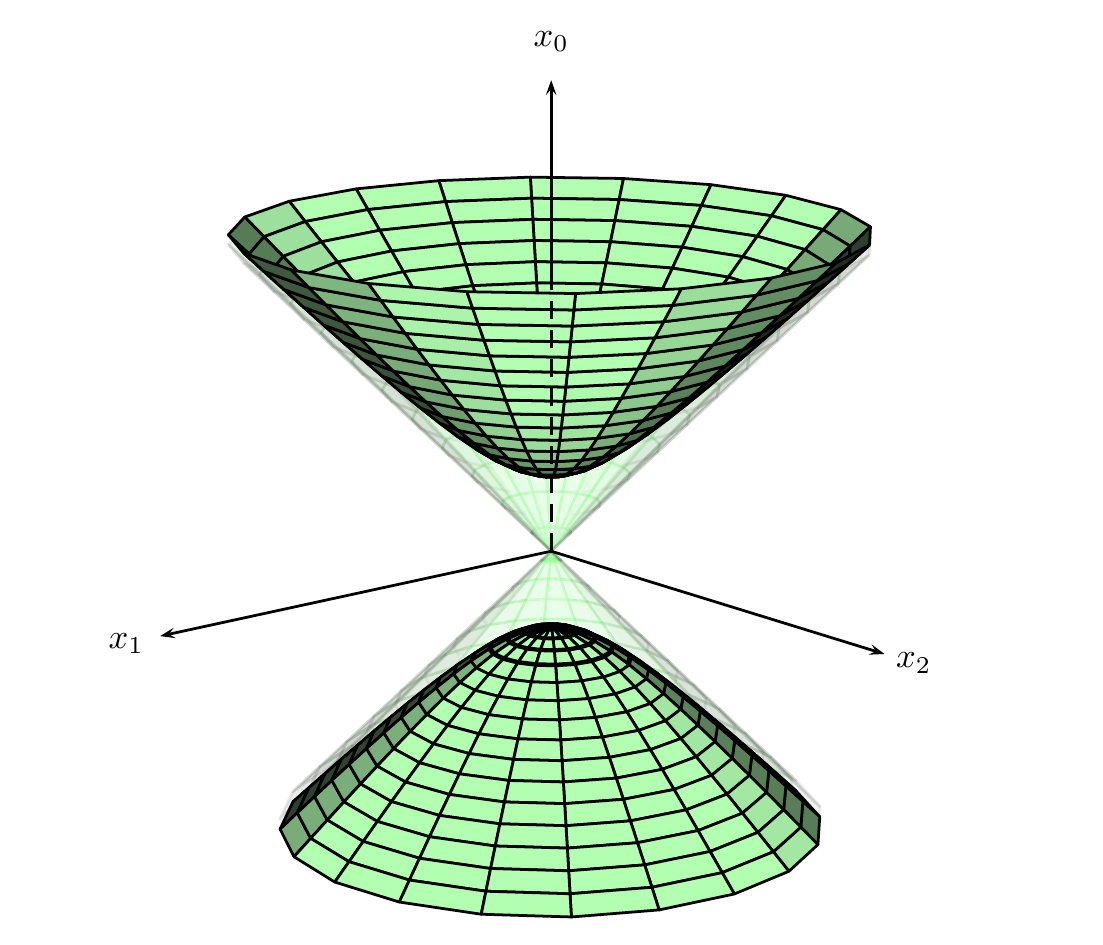}
	\caption{The hyperboloid model of hyperbolic space. The equation~$x\circ x = -1$ defines a hyperboloid consisting of two disconnected sheets. We identify the upper sheet ($x_0 >0$) with the hyperbolic plane.}\label{fig:hyperboloid_model}
\end{figure}

The matrix representation is obtained by the {\em hyperboloid model} of hyperbolic space:
In $D+1$-dimensional Minkowski space~$\mathbb{R}^{1,D}$ we can identify the $D$-dimensional hyperbolic plane with the set
\begin{align}
	\mathbb{H}^D = \Bigg\{ x\in \mathbb{R}^{1,D}\, \Bigg| \, x\circ x = - x_0^2 + \sum_{i=1}^{D} x_i^2 = -1,\, x_0 > 0 \Bigg\} 
\end{align}
by defining the distance between any two points $x,y\in \mathbb{H}^D$ as $\text{dist}(x,y) = \cosh^{-1}\left( - x\circ y \right)$ where $\circ$ denotes the Lorentzian inner product.
The group of invertible $(D+1)\times (D+1)$-matrices which leave the Lorentzian inner product invariant is called $\text{O}(1,D,\mathbb{R})$.
The isometry group of $\mathbb{H}^D$, i.e. the group of transformations which preserve the distance function $\text{dist}$ is isomorphic to the subgroup~$\text{O}^+(1,D,\mathbb{R})$ of index 2 which sends the upper sheet to itself and the lower sheet to itself (cf. \Cref{fig:hyperboloid_model}).
It is also known as the ``orthochronous Lorentz group'' in $D+1$-dimensions.

\subsubsection{Representaion of the infinite tessellation group}
We can construct the representation using the Gram-matrix~$g$ of the normal vectors of the hyperplanes of reflection.
Let~$e_1,\dotsc,e_5$ denote the standard basis vectors. The inner product between them is defined by
\begin{align}
	 g(e_i,e_j) = -2\, \cos\left(\frac{\pi}{r_{i,j}}\right)
\end{align}
where $r_{i,j}$ are the exponents in the relations which define the Coxeter group (see \Cref{sec:coxeter_groups}).
Since we consider regular tessellations $g$ will be tridiagonal with entries $\alpha_i = -2\, \cos(\pi/r_{i,i+1})$ on the first diagonals.
\begin{align}\label{eqn:inner_prod}
	g = \begin{bmatrix}
			2   & \alpha_0  & 0 & 0 & 0 \\
			\alpha_0   & 2 & \alpha_1 & 0 & 0 \\
			0   & \alpha_1 & 2 & \alpha_2 & 0 \\
			0   & 0 & \alpha_2 & 2 & \alpha_3 \\
			0   & 0 & 0 & \alpha_3 & 2
		\end{bmatrix}
\end{align}
If the Schl\"afli-symbol belongs to a hyperbolic tessellation then~$g$ has signature~$(-,+,+,+,+)$, i.e.~$g$ is equivalent, up to a change of basis, to the Lorentzian inner product~$\circ$.

The matrix representation $\rho : G \rightarrow \text{O}^+(1,D,\mathbb{R})$ can now be defined by their action on the basis vectors.
For each generator~$a_i$ of the Coxeter group its representation $\rho(a_i)$ is defined by its action on the standard basis:
\begin{align}\label{eqn:rep_def}
	\rho(a_i) \cdot e_j = e_j - 2\, \frac{g_{i,j}}{g_{i,i}}\, e_i = e_j - g_{i,j}\, e_i .
\end{align}

Let us verify that this is indeed a representation by explicitly checking the group relations.
To not clutter our notation we will write $r_i := \rho(a_i)$ (not to be confused with the matrix~$r$ from \Cref{sec:coxeter_groups}, which defines the relations).
Let us first check that we are indeed mapping onto reflections.
\begin{align}
\begin{split}
r_i^2(e_j) &= r_i\cdot (e_j - g_{i,j} e_i) \\
&= e_j - g_{i,j} e_i - g_{i,j} ( e_i - g_{i,i} e_i ) \\
&= e_j -2\, g_{i,j} e_i + 2\, g_{i,j} e_i\\
&= e_j
\end{split}
\end{align}
Hence,  $r_i$ is indeed a reflection.

We are left to show that the $r_i$ satisfy the ``off-diagonal relations'' of \Cref{eqn:coxeter_relations}.
Let $|i-j|>0$ and define~$v^{\perp_g}$ as the  space of all vectors orthogonal to $v$ with respect to~$g$.
Since $\dim e_i^{\perp_g} = \dim e_j^{\perp_g} = D$ and $e_i^{\perp_g} \neq e_j^{\perp_g}$ we have
\begin{align}
	\dim \left(e_i^{\perp_g} \cap e_j^{\perp_g}\right) = D-1. 
\end{align}
Since~$e_i$ and~$e_j$ do not belong to $e_i^{\perp_g} \cap e_j^{\perp_g}$ we can complete a basis of $e_i^{\perp_g} \cap e_j^{\perp_g}$ with~$e_i$ and~$e_j$ to form a basis~$\mathcal{B}$ of~$\mathbb{R}^{D+1}$. 
Let us express $r_i$ and $r_j$ in $\mathcal{B}$:
\begin{align}
\begin{split}
r_{i,\mathcal{B}} = \begin{bmatrix}
I_{D-1} & 0 & 0 \\
0 & -1 & g_{i,j} \\
0 & 0 & 1
\end{bmatrix}\\
r_{j,\mathcal{B}} = \begin{bmatrix}
I_{D-1} & 0 & 0 \\
0 & 1 & 0 \\
0 & g_{i,j} & -1
\end{bmatrix}
\end{split}
\end{align}
Their product is 
\begin{align}
r_{i,\mathcal{B}} \, r_{j,\mathcal{B}} = \begin{bmatrix}
I_{D-1} & 0 & 0 \\
0 & (g_{i,j})^2 - 1 & - g_{i,j} \\
0 & g_{i,j} & -1
\end{bmatrix} .
\end{align}

We can focus on the bottom-right two by two submatrix~$A$.
Its determinant is $\det(A) = 1$ and its trace is
\begin{align}
\begin{split}
\text{tr}(A) &= (g_{i,j})^2 - 2 \\
&= 4 \cos^2(\pi / r_{i,j}) -2 \\
&= 2 \cos(2 \pi / r_{i,j}) .
\end{split}
\end{align}
Therefore $A$ has two distinct eigenvalues $\lambda_+ = \exp(i\, 2 \pi / r_{i,j})$ and $\lambda_- = \exp(-i\, 2 \pi / r_{i,j})$ and hence satisfies $A^{r_{i,j}} = I_2$. 
We have thus shown that $(r_{i,\mathcal{B}} \, r_{j,\mathcal{B}})^{r_{i,j}} = I_n$.

We refer to Theorem 3A10 in~\cite{abstractregularpolytpes} for a proof that the representation~$\rho$ is faithful, i.e. the reflections $r_i$ do not satisfy other relations than the ones satisfied by the generators $a_i$ of the Coxeter group, so that $\text{im}(\rho) \simeq G$.
Note that~$\text{im}(\rho)$ is isomorphic a subgroup of~$\text{O}(1,D,\mathbb{R})$ as $g$ is equivalent to the Lorentzian inner product~$\circ$.

\subsubsection{Matrix entries}\label{sec:matrix_entries}
Our initial goal,  which we stated at the beginning of this section, was to obtain a matrix representation that allows us to take all elements modulo a large number.
Here we will see that this is generally not possible and we will show how to amend the idea to make it work.

What are the entries of the elements of~$\text{im}(\rho)$? -- Let us consider the self-dual tessellation wit Schl\"afli symbol~$\{5,3,3,5\}$.
From \Cref{eqn:rep_def} it is clear that all matrices have entries that are integer polynomials of the entries of~$g$.
The entries on the first diagonals of~$g$ (see \Cref{eqn:inner_prod}) are
\begin{align}
	\alpha_0 = \alpha_3 = -2\, \cos\left(\frac{\pi}{5}\right) = \frac{1+\sqrt{5}}{2} =: \phi
\end{align}
and
\begin{align}
	\alpha_1 = \alpha_2 = -2\, \cos\left(\frac{\pi}{3}\right) = -1 .
\end{align}
The matrix entries $\alpha_1$ and~$\alpha_2$ take integer values.
However,~$\alpha_0$ and~$\alpha_3$ are equal to the golden ratio~$\phi$, which is not an integer.
We can account for this by simply extending the ring of integers by $\phi$ and obtain $\mathbb{Z}[\phi]$.
Note that $\phi$ still fulfills the relation $\phi^2 -\phi -1 = 0$.
The associated polynomial $h = x^2-x-1$ is called the {\em minimal polynomial of~$\phi$}.
All generators are self-inverse and hence all matrices in $\text{im}(\rho)$ have entries in~$\mathbb{Z}[\phi]$.
To be able to use a computer algebra system we construct the ring~$\mathbb{Z}[\phi]$ from polynomials.
This can be achieved by considering all polynomials up to arbitrary multiples of $h$.
The set of all multiples of $h$ are called the {\em ideal generated by~$h$} and denoted
\begin{align}
	\langle h \rangle = \lbrace p\cdot h\mid p\in \mathbb{Z}[x] \rbrace .
\end{align}
Note that $\langle h \rangle$ is by definition closed under linear combinations.
The quotient ring~$\mathbb{Z}[x] / \langle h \rangle$ contains elements of the form~$p + \langle h \rangle$ with $p\in \mathbb{Z}[x]$.
In particular, if~$p$ is a multiple of~$h$ we have $p + \langle h \rangle = 0 + \langle h \rangle$.
This means that $x$ fulfills the same relations as~$\phi$ in~$\mathbb{Z}[x]$ and hence we have
\begin{align}
	\mathbb{Z}[\phi] \simeq \mathbb{Z}[x] / \langle h \rangle .
\end{align}
In the remainder of the paper we will abuse notation and directly identify $\mathbb{Z}[\phi]$ with $\mathbb{Z}[x] / \langle h \rangle$.

\subsubsection{Quotient}
In the previous paragraphs we have obtained a faithful matrix representation~$\rho$ of the symmetry group of the infinite $\{5,3,3,5\}$ tessellation of~$\mathbb{H}^4$.
We have seen that each element of~$\text{im}(\rho)$ has coefficients in~$\mathbb{Z}[\phi]$.

Our strategy to obtain symmetry groups of  closed hyperbolic four-manifolds is to factor out suitable ideals of~$\mathbb{Z}[\phi]$ to effectively obtain representations of $G$ over $\mathbb{F}_q^{5}$.
We need to show that factoring out ideals of the matrix entries does preserve the local structure, which means that the result should be the symmetry group of a closed, tessellated manifold that looks identical to the infinite tessellation in a large neighborhood.

Our goal is to obtain a family of quantum codes with growing distance.
To show that the distance increases it suffices to show that the tessellation on the closed manifold is indistinguishable from the infinite one in a large neighborhood, as no logical operator can have support inside this neighborhood.

Let $l$ be a positive integer.
We call a representation {\em $l$-locally faithful} if no non-identity element $g\in G\setminus \{e\}$, which can be written as a sequence of at most~$l$ generators of~$G$, is mapped to the identity matrix.
The following theorem is adpated from~\cite{guth_lubotzky} and~\cite{siran}.
\begin{thm}\label{thm:}
	For any positive integer~$l$ there exists an $l$-locally faithful representation of $G$.
\end{thm}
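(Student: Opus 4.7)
The plan is to construct the required representation by reducing the faithful integer-valued representation $\rho$ from the previous subsection modulo a carefully chosen prime ideal of $\mathbb{Z}[\phi]$. The starting observation is that the ball $B_l$ of radius $l$ in the word metric on $G$ is finite: since $G$ has five generators, $|B_l| \le 1 + 5\sum_{i=0}^{l-1} 4^i$. For every $g \in B_l \setminus \{e\}$, faithfulness of $\rho$ (Theorem 3A10 of~\cite{abstractregularpolytpes}) implies that $\rho(g) - I_5 \ne 0$, so one can single out a nonzero matrix entry $\alpha_g \in \mathbb{Z}[\phi]$. This yields a finite set $S = \{\alpha_g : g \in B_l \setminus \{e\}\}$ of nonzero algebraic integers.

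The core step is to exhibit a prime ideal $\mathfrak{p} \subset \mathbb{Z}[\phi]$ disjoint from $S$. Since $\mathbb{Z}[\phi]$ is the ring of integers of the real quadratic field $\mathbb{Q}(\sqrt 5)$, it is a Dedekind domain with infinitely many nonzero prime ideals. Each nonzero $\alpha \in S$ belongs to only finitely many prime ideals (namely those appearing in the unique prime-ideal factorization of $(\alpha)$); since $S$ is finite, the union of these ``forbidden'' primes is finite. Picking any prime $\mathfrak{p}$ outside this union and setting $\mathbb{F}_q := \mathbb{Z}[\phi]/\mathfrak{p}$ (a finite field) provides the reduction target.

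The representation $\bar\rho : G \to \mathrm{GL}_5(\mathbb{F}_q)$ is then defined as $\rho$ composed with coefficient-wise reduction modulo $\mathfrak{p}$. It is a group homomorphism because coefficient-wise reduction is a ring homomorphism, and it lands in $\mathrm{GL}_5$ because $\det \rho(g) = \pm 1$ for all $g$, which remains nonzero after reduction. For any nontrivial $g \in B_l$, the distinguished entry $\alpha_g$ of $\rho(g) - I_5$ reduces to a nonzero element of $\mathbb{F}_q$ by construction, so $\bar\rho(g) \ne I_5$, proving $l$-local faithfulness.

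I expect the only step with real mathematical content to be the Dedekind-domain argument producing $\mathfrak{p}$; the remainder is bookkeeping. Two small points worth verifying in passing are that coefficient-wise reduction is well defined on $\mathbb{Z}[\phi]$, which is immediate from the quotient description $\mathbb{Z}[\phi] \simeq \mathbb{Z}[x]/\langle h \rangle$ in \Cref{sec:matrix_entries}, and that reduction preserves invertibility, which follows from $\det \rho(g) \in \{\pm 1\}$ since $\mathrm{im}(\rho) \subset \mathrm{O}^+(1,4,\mathbb{R})$.
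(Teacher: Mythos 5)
Your proposal follows essentially the same route as the paper: take the faithful integral representation $\rho$ over $\mathbb{Z}[\phi]$ and compose with reduction modulo a well-chosen prime ideal, using $\det\rho(g)=\pm1$ to ensure the image lands in $\mathrm{GL}_5(\mathbb{F}_q)$ and finiteness of the ball $B_l$ to ensure that a single ideal can separate all of $B_l\setminus\{e\}$ from the identity. The one place you go beyond the paper is in justifying the choice of ideal: the paper simply asserts that taking the prime $p$ ``suitably large'' works, whereas you supply the clean Dedekind-domain argument that each nonzero $\alpha_g$ lies in only the finitely many primes dividing $(\alpha_g)$, so a prime avoiding the finite union exists; this fills a genuine gap in the published sketch without changing the overall strategy.
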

\begin{proof}
	Let $\pi_I : \mathbb{Z}[\xi] \rightarrow \mathbb{Z}[\xi] /I$ be the quotient map for an ideal $I \subset \mathbb{Z}[\xi]$.
	Here we will only consider maximal ideals~$I$ so that~$\mathbb{Z}[\xi] /I$ is in fact a field $\mathbb{F}_q$ of characteristic~$p$.
	We note that $I$ is of the form $\langle p \rangle$ or $\langle p, g(\xi) \rangle$, where~$g$ is an irreducible factor of the minimal polynomial of $\xi$ in $\mathbb{F}_p$ (see \Cref{thm:ideal_types} in Appendix~\ref{sec:ideal_types}).
	
	Let $\rho$ be the representation of the infinite tessellation group defined by \Cref{eqn:rep_def}.
	We can extend $\pi_I$ to act on the coefficients of matrices over~$\mathbb{Z}[\xi]$.
	Since $G$ is generated by reflections it is easy to see that the matrices in $\text{im}(\rho)$ have determinant $\pm 1$.
	Since $1\notin I$ it follows that $\pi_I(\text{im}(\rho))$ only contains invertible matrices and hence we have that the function $\pi_I\circ \rho : G \rightarrow \text{GL}(D+1,\mathbb{F}_q)$ is well-defined.
	
	We will now show that for a suitable choice of the ideal $I$ the representation~$\pi_I\circ \rho$ is $l$-locally faithful.
	Let $g\in G\setminus \{e\}$ be a Coxeter group element which can be written as the product of $u\leq l$ generators, i.e. $g = a_{i_1}\dotsb a_{i_u}$.
	We need to show that~$\pi_I\circ \rho(g)$ is not the identity matrix.
	Clearly~$\rho(g)$ is not the identity matrix as $\rho$ is faithful.
	Furthermore, by choosing the prime $p$ in the ideal $I$ to be suitably large the image of~$\rho(g)$ under~$\pi_I$ is not the identity matrix.
\end{proof}

\subsubsection{Group structure}
It turns out that the group obtained by the procedure outlined above can have a particularly simple structure.
Assume that $I=\langle p \rangle$ and that $\pi_I(g)$ is non-singular.
As the elements of $\text{O}(1,5,\mathbb{Z}[\phi])$ preserve~$g$ we have that their images preserve~$\pi_I(g)$.
This means that $\text{im}(\pi_I)$ is a subgroup of~$\text{GO}_5(q)$, the orthogonal group over~$\mathbb{F}_q$.\footnote{The orthogonal groups over finite fields in odd dimensions are all isomorphic~\cite{wilson}.}
By diagonalizing~$g$ it is easy to see that $\pi_I$ is in fact surjective, so that we have $\rho \circ \pi_I \simeq \text{GO}_5(q)$.

The structure of $\text{GO}_5(q)$ for odd $q$ is well-known~\cite{wilson}: it decomposes into three simple groups as 
\begin{align}
	\text{GO}_5(q) \simeq \Omega_5(q)\rtimes (\mathbb{Z}_2\times \mathbb{Z}_2).
\end{align}
The group $\Omega_5(q)$ is also known as the Chevalley group $B_5(q)$ in the literature.
\begin{rem}
	Note that this decomposition is similar to the familiar one of the Lorentz group in $D=3$ into four connected components
	\begin{align}
	\text{O}(1,3,\mathbb{R}) \simeq \text{SO}^+(1,3,\mathbb{R}) \rtimes (\mathbb{Z}_2\times \mathbb{Z}_2)
	\end{align}
	where $\text{SO}^+(1,3,\mathbb{R})$ is the proper, orthochronous Lorentz group and $\mathbb{Z}_2\times \mathbb{Z}_2$ is generated by a time-like reflection (time reversal) and a space-like reflection.
\end{rem}

\subsubsection{Size of the quantum code}
The number of physical qubits~$n$ is given by the number of faces in the lattice.
We can count the number of faces (and cells of any other dimension) by counting the number of fundamental simplices and divide by the number of simplices per cell.
The number of fundamental simplices is the same as the order of the symmetry group of the lattice, which for odd~$q$ is given by the polynomial $|\Omega_5(q)\rtimes \mathbb{Z}_2| =  q^{10} - q^8 - q^6 + q^4$ \cite{wilson}.

For the $\{5,3,3,5\}$ tessellation there are 100 fundamental simplices per face and thus the formula for the size of a quantum code based on this construction is:
\begin{align}\label{eqn:code_size}
	n(q) = \frac{q^{10} - q^8 - q^6 + q^4}{100}
\end{align}
Note that we had to assume that $q$ is odd. We will later discuss examples with $q$ even for which \Cref{eqn:code_size} fails.

The golden ration $\phi$ has minimal polynomial $x^2-x-1$.
For~$p$ such that $x^2-x-1$ is irreducible in~$\mathbb{F}_p$ we  obtain $\mathbb{Z}[\phi] / \langle p \rangle \simeq \mathbb{F}_{p^2}$ and thus $n\in O(p^{20})$ in agreement with~\cite{guth_lubotzky}.

\subsection{Coverings}\label{sec:coverings}

In order to obtain more examples from the ones generated in previous sections we will now introduce topological coverings.
They will allow us to construct less symmetric examples as the group-based constructions.

\subsubsection{Definition}

In addition to the previous two methods for obtaining finite manifolds we employ a third method to construct small examples.
This method is based on \emph{coverings}: If $X$ and $C$ are topological spaces we say that~$C$ is \emph{covering space} if there exists a continuous surjective map $p:C\rightarrow X$ such that for any point $x\in X$ we have that there exists an open neighborhood~$U$ such that the pre-image~$p^{-1}(U)$ is a disjoint union of open sets in~$C$ each homeomorphic to~$U$.
If the number of these copies is fixed it is called the \emph{degree} of the covering.  We will call a covering of degree~$n$ an {\em $n$-fold covering}.
A famous example from physics is the 2-fold cover of~$\text{SO}(3)$ by~$\text{SU}(2)$.
In \Cref{fig:coverings} we show two further examples of coverings. The first (\Cref{fig:covering_circle}) is an infinite cover of the circle by the real line, depicted by putting the real line in a spiral over the circle so that~$p$ can be thought of as a projection along the vertical axis.
The covering an be realized by identifying $S^1$ with the unit circle in $\mathbb{C}$ and consider $p=\exp : \mathbb{R}\rightarrow S^1 \subset \mathbb{C},\, t \mapsto e^{it}$.
The second example (\Cref{fig:covering_torus}) is a 4-fold cover of a torus by another torus.
The covering is realized by taking translations in $x$- and $y$-direction modulo~20 and~10, respectively.

\begin{figure}
	\centering
	\subfloat[Infinite covering of circle~$S^1$ (blue) by real line~$\mathbb{R}$ (green).]{\includegraphics[width=0.45\linewidth]{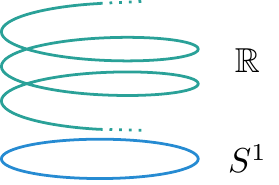}\label{fig:covering_circle}}
	\hfil
	\subfloat[4-fold covering of a $20\times 10$-torus (blue) by a $40\times 20$-torus (green).]{\includegraphics[width=0.45\linewidth]{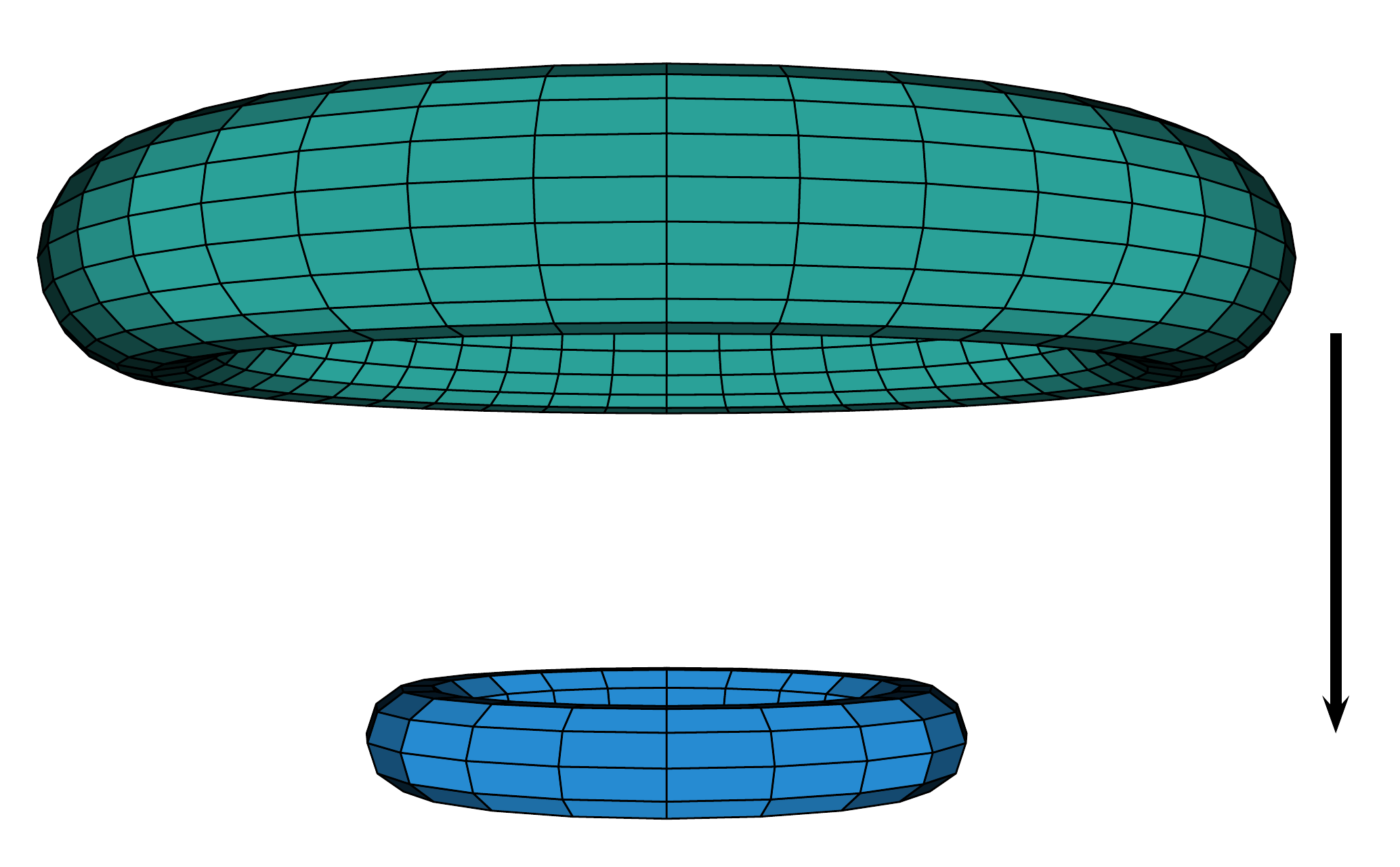}\label{fig:covering_torus}}
	\caption{(a)~The circle $S^1$ is covered by the real line.  The covering can be constructed by identifying $S^1$ with the set of complex numbers with unit 2-norm and defining $p=\exp : \mathbb{R}\rightarrow S^1 \subset \mathbb{C},\, t \mapsto e^{it}$. This is an example of an infinite covering, as the pre-image of any point has infinite cardinality. The deck transformation group is the abelian group $\mathbb{Z}$.
		     (b)~The small torus is covered by the larger torus.  The larger torus has 4 times the area of the smaller one. The covering is constructed by taking translations in the larger torus modulo the corresponding length in the smaller torus, in this case modulo~20 in the $x$-direction and modulo~10 in the $y$-direction. The preimage of any point contains four elements on which the deck transformation group $\mathbb{Z}_2 \times \mathbb{Z}_2$ operates.
	        }
	\label{fig:coverings}
\end{figure}

Coverings can be equipped with a group structure:
Homeomorphisms operating on the covering space $\phi: C \rightarrow C$ such that $p\circ \phi = p$ are called \emph{deck transformations}.
They form a group under composition called the \emph{deck transformation group}.
There is a natural group operation of the deck transformation group of a covering $p$ on the pre-image of a given point $x\in X$, as it permutes the elements of $p^{-1}(x)$.

What are the deck transformation groups in the two examples of \Cref{fig:coverings}? -- For the real line and the circle we can perform shifts by multiples of $2\pi$, i.e. $t\mapsto t+2\pi x$, leaving the image of~$t$ under~$\exp$ invariant.
The deck transformation group is hence isomorphic to the infinite abelian group~$\mathbb{Z}$.
For the covering of the torus it is clear that we can perform translations by~20 in the $x$-direction and translations by 10 in the $y$-direction leaving the modulus invariant.
Hence, the deck transformation group is isomorphic to~$\mathbb{Z}_2 \times \mathbb{Z}_2$.

\subsubsection{Finite coverings of hyperbolic 4-manifolds}
We construct further instances of hyperbolic 4-manifolds by enumerating all conjugacy classes of subgroups of the symmetry group of a given tessellated finite hyperbolic 4-manifold.
Not all subgroups preserve the local structure of the tessellation, as they may contain elements which have fixed points, such as reflections or rotations.
However, since we consider tessellations we have the more stringent restriction that the deck transformation group should  respect the local structure of the tessellation.
This restriction can be formulated in group-theoretic language and we call it the \emph{non-local subgroup condition} which we derive in Appendix~\ref{sec:quotient_condition}.

The deck transformation groups of the 4D hyperbolic manifolds constructed by coverings can be found in \Cref{tab:examples} under ``structure''.

\subsection{Examples of $\{5,3,3,5\}$-Codes}\label{sec:examples}

\begin{table*}[h]
	\centering
	\caption{Examples of $\{5,3,3,5\}$-codes.}\label{tab:examples}
	\bgroup
	\def\arraystretch{1.5}%
	\begin{tabular}{r r r c c r}
		\hline
		\hline
		\# & $n$ & $k$ & ideal & structure & Euler characteristic $\chi$ \\ 
		\hline
		1 & 144 & 72  & -- & $\left(\text{SL}_2(5)\rtimes \text{A}_5\right)\rtimes \mathbb{Z}_2$ & 26  \\ 
		2 & 720 & 184  & -- & 125-fold covered by 10, $\mathbb{Z}_5 \times \mathbb{Z}_5 \times \mathbb{Z}_5$ & 130  \\
		3 & 3,264 & 744 &  -- & 3-fold covered by 6, $\mathbb{Z}_3$ & 636  \\
		4 & 3,600 & 736 &  -- & 25-fold covered by 10, $\mathbb{Z}_5 \times \mathbb{Z}_5$ & 650 \\
		5 & 4,896 & 1,124 &  -- & 2-fold covered by 6, $\mathbb{Z}_2$ & 968  \\
		6 & 9,792 & 2,200 &  $\langle 2 \rangle$ & $\Omega_5(4)$ & 1,904  \\ 
		7 & 18,000 & 3,624 & -- & 5-fold covered by 10, $\mathbb{Z}_5$ & 3,250  \\ 
		8 & 18,432 & 4,232 &  -- & $\mathbb{Z}_2^{\times 8}\rtimes \left[\left(\text{A}_5\rtimes \text{A}_5\right)\rtimes \mathbb{Z}_2\right]$ & 3,584  \\ 
		9 & 19,584 & 4,324 & -- & $\Omega_5(4)\rtimes \mathbb{Z}_2$ & 3,808  \\ 
		10 & 90,000 & 18,024 &  $\langle \sqrt{5} \rangle$ & $\left[\left(\mathbb{Z}_5^{\times 4}\rtimes \text{SL}_2(5)\right)\rtimes \text{A}_5\right]\rtimes \mathbb{Z}_2$ & 16,250  \\ 
		11 & 34,432,128 & ? &  $\langle 3 \rangle$ & $\Omega_5(9)\rtimes \mathbb{Z}_2$ & 6,216,912 \\
		12 & 257,213,088 & ? & $\langle 11 \rangle$ & $\Omega_5(11)\rtimes \mathbb{Z}_2$ & 46,441,252 \\
		13 & 61,140,357,792 & ? & $\langle 19 \rangle$ & $\Omega_5(19)\rtimes \mathbb{Z}_2$ & 11,039,231,268 \\
		\hline
		\hline
	\end{tabular}
	\egroup
\end{table*}

Using the constructions of \Cref{sec:constructionRWS,sec:constructionPSL,sec:coverings} we have found  examples which are small enough to perform Monte Carlo simulations.
Here we will discuss the properties of these examples in more detail.
A summary can be found in \Cref{tab:examples}, where the properties of the tessellated manifolds and the associated quantum codes are listed.
The column labeled ``structure'' contains either the structure description of the associated symmetry group or, if the example was constructed from a finite covering, the number of the covering manifold and the deck transformation group.

\subsubsection{Based on FP-groups}
The construction based on finitely presented groups of \Cref{sec:constructionRWS} gave us three examples: the smallest with 144 physical qubits, one with 18,432 physical qubits and one with 19,584 physical qubits.
All are obtained by factoring out a single translation.
They are (in the same order as above): 
\begin{itemize}
	\item {\small $ababacbdedcbabacedcbaedced$}
	\item {\small $bedcbabedcbabedcbabedcbabedcbabedcba$}
	\item {\small $baedcbedcbabacbdcedcbabcedcbabacbded$}
\end{itemize}
For readability we have written the generators as $a,\dotsc , e$ instead of $a_i$ for $i=0,\dotsc ,4$ (cf. \Cref{sec:constructionRWS}).

The first example is known as the {\em Davis manifold} and was first described in~\cite{davismanifold}.
It can be constructed from a single 120-cell (cf.~\Cref{fig:120cell}) by taking opposing dodecahedra at the boundary and identifying them.\footnote{The same procedure in 3D corresponds to identifying opposing faces of a dodecahedron. Note that as opposed to 4D, in 3D these faces do not allign and different rotations to make the faces match give rise to topologically different 3-manifolds: the Poincar\'e homology sphere, the Seifert-Weber space and the 3D real projective space~\cite{weber}.}
Note that doing so we do not obtain a proper $\{5,3,3,5\}$ tessellation, as for example the number of 3 cells incident to the (unique) 4-cell is~60 instead of~120.
However, the incidence numbers involving 2-cells is the same as for a proper $\{5,3,3,5\}$ tessellation, so that stabilizer weights and qubit degrees are unaffected.
By construction, the number of 3-cells in the Davis manifold is~60.  The number of faces, edges and vertices is~144,~60 and~1 and thus its Euler characteristic (cf. \Cref{sec:guth_lubotzky_codes}) is $\chi = 26$.
We note that the Davis manifold gives rise to a trivial error detection code of encoding rate~$1/2$ and distance~2.

The two larger examples are not proper $\{5,3,3,5\}$ tessellations either as their 4-cells contain only 60 3-cells as well.
They have, as far as we are aware, not appeared in previous literature.

\subsubsection{Based on Linear Representations}
The construction based on matrix representations (see \Cref{sec:constructionPSL}) yielded several more examples.
Let us first consider the simplest example in which we reduce the matrices modulo 2, i.e. we factor out the ideal~$\langle 2 \rangle$.
Since $x^2-x-1$ is irreducible in~$\mathbb{F}_2$ we obtain a matrix group with coefficients in $\mathbb{F}_4$.
This gives rise to a quantum code with 9,792 physical qubits and 2,200 logical qubits.
We note that the underlying group is isomorphic to~$\Omega_5(4)$.
We can change the set of generators from reflections to rotations by taking products $a_i a_j$ as a new set of generators.
Factoring out~$\langle 2 \rangle$ from the group generated by rotations gives the group $\Omega_5(4) \rtimes \mathbb{Z}_2$ which defines a quantum code with 19,584 physical qubits and 4,324 logical qubits.
This is the same group that we found previously using finitely presented groups.

Next, we will consider an example where the minimal polynomial does become reducible:
Consider the ideal generated by~$\sqrt{5} = 2\phi -1 \in \mathbb{Z}[\phi]$.
The quotient~$\mathbb{Z}[\phi] / \langle 2\phi -1 \rangle$ turns out to be isomorphic to~$\mathbb{F}_5$.
The resulting quantum code has 90,000 physical qubits and 18,024 logical qubits.

The next largest examples are the ideals generated by~3,~11 and~19 (see~\Cref{tab:examples}).  However, these were too large to determine the number of encoded qubits.
We note that the encoding rate is close to the upper bound given in \Cref{eqn:encoding_rate}.

\subsubsection{Based on Finite Coverings}
Further small examples can be obtained by the covering procedure (see~\Cref{sec:coverings}).
Two of the coverings we found appeared in previous literature:
a 5-fold covering using the $n=90,000$ manifold (number~10 in \Cref{tab:examples}) had been found in~\cite{5335compact}.
It was also observed in~\cite{5335compact} that the $n=90,000$ manifold is a $625$-covering space of the Davis manifold.
Further examples are enumerated in~\Cref{tab:examples} where the covering and the deck transformation group are specified.

\subsubsection{Further remarks}
Since we do not have an expression for the number of encoded qubits~$k$ we constructed the boundary operators and obtained the dimension of the second homology group (see~\Cref{tab:examples}).
We observe that for the examples we constructed the encoding rate~$k/n$ is close to the lower bound of \Cref{eqn:encoding_rate}.
Unlike for 2D homological codes for which one can efficiently determine the distance~\cite{semihyperbolic} we are not aware of any efficient procedure to obtain the distance of higher-dimensional homological codes.
A randomized searching procedure yielded logical operators of weight~2 for the $n=144$ code, a logical operator of weight~12 for the $n=720$-code and a logical operator of weight~6 for the $n=3,264$ code.
Note that these are upper bounds as logical operators of smaller weight may still exist.

\section{Decoding and Performance}\label{sec:performance}

The stabilizer checks of a $\{5,3,3,5\}$ code correspond to dodecahedra in the primal tessellation ($Z$-checks) and in the dual tessellation ($X$-checks).
Each check acts on all of its adjacent qubits which correspond to the pentagonal faces.
The stabilizer checks of the code to fullfill non-trivial linear dependencies:
The boundary of a 120-cell contains dodecahedra and as the boundary itself is boundaryless it follows that the product of all checks belonging to a 120-cell has to vanish.
This can also be understood when we interpret the poset diagram in \Cref{fig:poset} as a Tanner graph.
The three levels in the middle form the quantum code, while levels 0 and 4 determine the linear dependencies of the checks.
Due to the linear dependencies the syndrome in a 4D code consists of closed loops.

Assuming that errors occur independently and homogeniously a good decoding strategy is minimum weight decoding.
Unfortunately, there is no known efficient algorithm which given a collection of loops in 4D lattice returns a minimum-weight surface which has these loops as its boundary.
Regardless, we can settle for a less optimal, but efficient solution.
It was observed in~\cite{DKLP} that a 4D code can be decoded by ``shrinking'' the syndrome loops.
In~\cite{hastings_decoder} it was shown that in 4D hyperbolic space a linear time decoding procedure exists.

In this paper we will consider two decoding stategies: the first decoder is based on a cellular automaton and the second on belief-propagation.

\subsection{Cellular Automaton Decoder}\label{sec:CAdecoder}

\subsubsection{Background}

Cellular automata can be used to implement a primitive decoding algorithm.
This type of decoder has several desirable features.
It can be implemented using very simple classical control, as it essentially performs a majority vote on a small number of input signals and sends a signal to perform a bit- or phase-flip based on the outcome.
This is extremely fast and dissipates little heat when compared to other decoding schemes, such as minimum-weight perfect matching.
This is important as it makes it possible to implement the classical control close to the qubits which, depending on the specific hardware implementation, have to be kept at temperatures of a few Kelvin down to hundreds of milli-Kelvin.

Using cellular automata to decode quantum codes was first suggested in~\cite{DKLP}.
Two different update rules have been used in the literature.
The first is a {\em majority-vote rule} which simply performs a bit-/phase-flip if more than half of the Z-/X-checks incident to the qubit are violated.
The second rule is called {\em Toom's rule}.
It only performs bit-/phase-flips if parity checks in a specified direction are violated.
Toom's rule was first introduced in the classical setting as non-equilibrium dynamics for the 2D Ising model, exhibiting the unusual property of a stable memory phase at non-zero temperature in the presence of a magnetic field \cite{grinstein}.
Toom's rule has been shown to perform better than the majority-vote rule~\cite{local_decoders} when applied to the 4D toric code with a hypercubic tessellation.
It has been generalized to other euclidean tessellations in~\cite{kubica_cellular}.
It is, however, not clear how to apply Toom's rule in hyperbolic space.
The reason for this is that to be well-defined it needs a distinguished direction and hence a notion of parallel lines which is consistent throughout the system.
In hyperbolic space a single line does not uniquely define a parallel line through any other point, as Euclid's fifth postulate famously does not hold in hyperbolic space.
We will therefore only consider the isotropic majority-vote rule.

\subsubsection{Monte Carlo}

We consider the independent bit-/phase-flip model, where each qubit is acted upon by Pauli-$X$ and Pauli-$Z$ each with probability~$p$.
We then run the CA decoder until the weight of the syndrome stops decreasing.
If the syndrome weight is non-zero we declare the trial a failure.
If the syndrome weight is zero we are back in a code state. In this case we check whether the error together with the recovery given by the CA decoder contains a non-trivial logical operator.

\begin{figure}[h]
	\centering
	\subfloat[linear]{\includegraphics[width=0.49\linewidth]{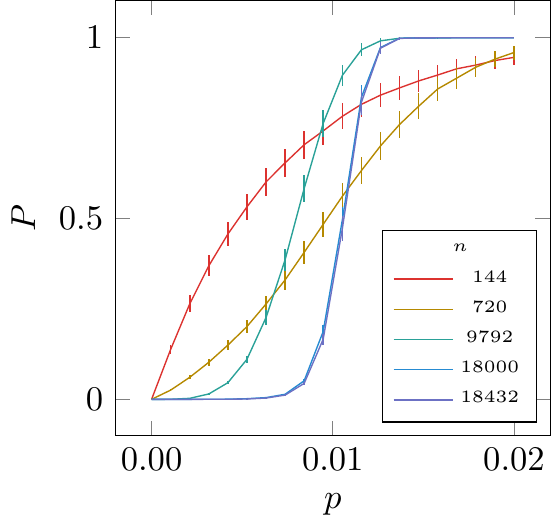}\label{fig:ca_linearplot}}
	\hfil
	\subfloat[log-log]{\includegraphics[width=0.49\linewidth]{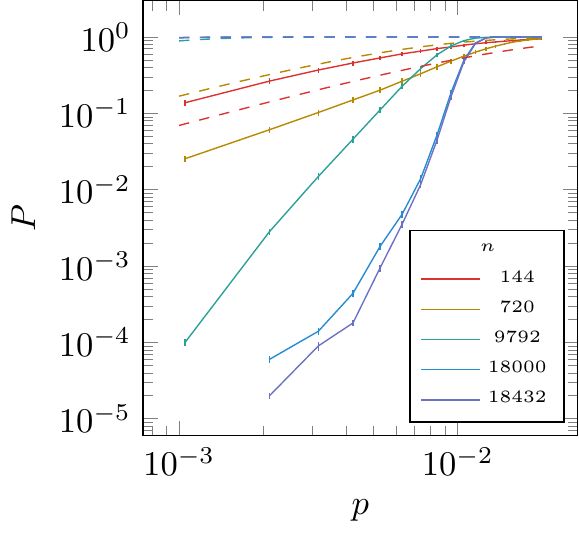}\label{fig:ca_logplot}}
	\caption{(a) Performance of CA decoder 
		(b)~Same data plotted with log-log axes. The dashed lines indicate the error probability if we were to take~$k$ unencoded qubits $1-(1-p)^k$.
	}
	\label{fig:ca_plot}
\end{figure}

The results of the simulation can be found in \Cref{fig:ca_plot}.
In~\cite{local_decoders} the same decoder under the same error model was applied to the 4D toric code, which is defined on the euclidean $\{4,3,3,4\}$ hypercubic tessellation.
We notice that the performance of the hyperbolic codes is better: for the 4D toric code the threshold error rate is below~$0.5\%$ while for the hyperbolic codes here errors are  suppressed for physical error rates below~$1\%$.
This is despite the fact that the hyperbolic code has higher stabilizer weight (12 instead of 6).

A quantity of interest is the {\em pseudothreshold}, which is the physical error probability below which the logical error probability is below the physical error probability, i.e. the error probability below which encoding is benificial over having bare qubits.
The error probability of~$k$ unencoded qubits is $1-(1-p)^k$ and marked in dashed lines in \Cref{fig:ca_logplot}.
We see that the $n=144$ code does not have a pseudothreshold which is expected, as it has distance 2 (cf.~\Cref{sec:examples}).
All other codes have a pseudothreshold of around~1\%.

\subsection{Belief-Propagation Decoder}\label{sec:BPdecoder}
The cellular automaton decoder of the previous section makes decisions based on a very limited amount of information: it can only see parity-check violations in its immediate vicinity.
This limitation is overcome by considering {\em belief-propagation} which is a commonly used decoding algorithm for classical LDPC codes~\cite{richardson_urbanke}.
Belief-rpropagation has been previously applied to quantum codes in~\cite{duclos2010fast} and~\cite{panteleev}.

\subsubsection{Background}
For Tanner graphs which are trees the Belief Propagation (BP) decoder corresponds to maximum likelihood decoding. 
As the Tanner graph of a $\{5,3,3,5\}$ quantum code is not a tree, BP gives a heuristic decoding algorithm in this setting.

In order to define BP, let $X^{(j)}$ be random variables corresponding to qubits.
With the assumed noise model, they are independently and identically distributed like Bernoulli variables with parameter $p \in [0,1]$. 
Let $Y^{(k)}$ be random variables corresponding to check nodes defined as
\begin{align}
	Y^{(k)} = \bigoplus_{\text{j neighbour of k}} X^{(j)}.
\end{align}
The values of $Y^{(k)}$ are what we observe when extracting the syndrome information and we denote them by $y^{(k)}_{\text{obs}}$.

We want to compute marginals of the random variables~$X^{(j)}$ conditioned on the observations $y^{(k)}_{\text{obs}}$.
If the Tanner graph were a tree, we could set one of the qubits to be the root of this tree. We will use the notation $k > j$ to denote that $k$ is a descendant of $j$.
For each qubit $j$, we define the following function whose domain is $\{0,1\}$:
\begin{align}
	p^{(j)} (x) = \Pr\left( X^{(j)} = x \mid \{ Y^{(k)} = y^{(k)}_{\text{obs}} \}_{k > j}\right)
\end{align}

For each check node $k$, denote by $j$ its parent qubit, we define the following function whose domain is $\{0,1\}$:
\begin{align}
	q^{(k)} (x) = \Pr\left(Y^{(k)} = y^{(k)}_{\text{obs}} \mid X^{(j)} = x \, , \, \{ Y^{(m)} = y^{(m)}_{\text{obs}} \}_{m > k}\right) 
\end{align}

To compute $p^{(j)} (x)$ from $(q^{(k)} (x))_{k \text{ children of } j}$, we need the following variation of Bayes' formula:
\begin{align}
	\Pr(A \, | \, B,C) \Pr(B \, | \, C) = \Pr(B \, | \, A,C) \Pr(A \, | \, C)
\end{align}

Indeed the left hand side of the above equation equals
\begin{align}
	\frac{\Pr(A \cap B \cap C)}{\Pr(B \cap C)}  \frac{\Pr(B \cap C)}{\Pr(C)} = \frac{\Pr(A \cap B \cap C)}{\Pr(C)}
\end{align}
which is symmetric in $(A,B)$ and therefore equals the right hand side. 
We apply this formula to the events
\begin{align}
\begin{split}
A &= \left( X^{(j)} = x \right) \\
B &= \left( \{ Y^{(k)} = y^{(k)}_{\text{obs}} \}_{k \text{ children of } j} \right) \\
C &= \left(  \{ Y^{(m)} = y^{(m)}_{\text{obs}} \}_{m > j \, , \, m \text{ not a child of } j} \right) 
\end{split}
\end{align}
and define the normalization constant $Z = \Pr(B \, | \, C)$. We know that $\Pr(A) = p$ and obtain
\begin{align}
\begin{split}
p^{(j)} (x) &= \Pr(A \, | \, B,C) \\
 &= \frac{p}{Z} \prod_{k \text{ children of } j} q^{(k)} (x). 
\end{split}  
\end{align}
Since $p^{(j)}(0) + p^{(j)}(1) = 1$, we obtain that
\begin{align}\label{eqn:qc_prob}
	Z = p \prod_{k \text{ children of } j} q^{(k)} (x) + (1-p) \prod_{k \text{ children of } j} q^{(k)} (1-x).
\end{align}

We now compute $q^{(k)} (x)$ from $(p^{(l)} (x))_{l \text{ children of } k}$.
We have
\begin{align}
q^{(k)} (x) &= \Pr\left(Y^{(k)} = y^{(k)}_{\text{obs}} \, | \, X^{(j)} = x \, , \, \{ Y^{(m)} = y^{(m)}_{\text{obs}} \}_{m > k}\right) 
\end{align}
giving
\begin{align}\label{eqn:cq_prob}
	1 - 2q^{(k)} (x) &= (-1)^{y^{(k)}_{\text{obs}} + x + 1} \prod_{l \text{ children of } k} (1 - 2 p^{(l)} (1) ). 
\end{align}

We could use \Cref{eqn:qc_prob,eqn:cq_prob} directly to define the iterative Belief Propagation algorithm. However for numerical stability reasons we will follow~\cite{richardson_urbanke} and use logarithmic ratios:

\begin{align}
\begin{split}
lp^{(j)} &= \log \frac{p^{(j)}(0)}{p^{(j)}(1)} \\
lq^{(k)} &= \log \frac{q^{(k)}(0)}{q^{(k)}(1)}
\end{split}
\end{align}

Under this transformation \Cref{eqn:qc_prob} translates into:
\begin{align} \label{qc llr}
lp^{(j)} = \log{\frac{1-p}{p}} + \sum_{k \text{ children of } j} lq^{(k)}.
\end{align}

Observing that $q^{(k)}(1) = (\exp{(lq^{(k)})}+1)^{-1}$, we obtain
\begin{align}
1 - 2q^{(k)}(1) &= \tanh{\frac{lq^{(k)}}{2}}.
\end{align}

Similarly $1 - 2p^{(j)}(1) = \tanh{(lp^{(j)} / 2)}$ and therefore \Cref{eqn:cq_prob} translates into:
\begin{align} \label{cq llr}
lq^{(k)} &= \frac{(-1)^{y^{(k)}_{\text{obs}}}}{2} \text{ argtanh } \left( \prod_{l \text{ children of } k} \tanh{\frac{lp^{(l)}}{2}} \right) 
\end{align}

The BP decoder we use is defined from \Cref{qc llr,cq llr}: the check node $k$ sends the message $lq^{(k)}$ to its parent node. The qubit $j$ sends the message $lp^{(j)}$ to its parent node. 
The first message is sent by the leaves of the tree, which we assume are qubits. It is initialized to $\log{(\frac{1-p}{p})}$.
The last message is received by the root of the tree, which we assume is a qubit. The value $(\exp{(lp^{(\text{root})})}+1)^{-1}$ gives the probability that the random variable corresponding to the root is~1 conditioned on the observation of all the check variables. 

The derivation assumed that the Tanner graph was a tree.
However, even for codes for which this is not the case we can still use the Belief Propagation algorithm as it was described above.
Although it does not compute exact probabilities any more: it is a heuristic whose performance we investigate numerically.

\subsubsection{Monte Carlo (perfect measurements)}

We first consider the setting where measurements can be performed perfectly, meaning without errors.
We apply the Belief Propagation decoder in parallel.
A round of message-passing consists in each qubit sending a message to each of its neighbor check node and each check node sending a message to each of its neighbor qubit.
After each round~$r$ of Belief Propagation we compute~$w_r$, the weight of the syndrome if we were to flip the qubits whose belief to have an error is higher than $0.5$. We stop as soon as $w_{r} \geq w_{r-1}$ or when $w_r = 0$. If we stopped because $w_r = 0$ and there is no logical error, we say that the decoding succeeded. \Cref{fig:bp_plot_n} shows the statistical frequency of unsuccessful decoding as a function of the physical error rate.

The data is consistent with a threshold above~5\% physical error rate.
However, we would like to note that due to the complicated dynamics of belief propagation it is generally hard to prove that a decoding threshold exists.
In fact it is known from classical coding theory that the performance of BP reaches an ``error floor'' for low physical error rates which occurs due to loops in the Tanner graph~\cite{richardson_urbanke}.
It is possible to eliminate this problem by postprocessing the output of BP with the {\em ordered statistics decoder}~(OSD) which has a computational complexity of~$O(n^3)$~\cite{panteleev}.
We have not done this here and leave it as future work.

\subsubsection{Monte Carlo (noisy measurements)}
To simulate noisy syndrome extraction we flip the syndrome with probability $q$.
For our simulations we chose $q=p$.
We consider $T$ rounds of error correction. 
In each round $t \in \{1, ..., T\}$, each qubit independently undergoes a $Z$ error $e_t^{noise}$ with probability $p$.
If $t \neq 1$, this error $e_t$ is added to $e_{t-1}^{res.}$, the residual error at round $t-1$.  The noiseless syndrome is computed:
$$s_t^{noiseless} = H(e_{t-1}^{res.} \oplus e_t^{noise}).$$
For $t \in \{1, ..., T-1\}$, each check node independently undergoes an error with probability $q$. This defines a syndrome noise $s_t^{noise}$.
The noisy syndrome is given to the BP decoder: 
\begin{align}
	s_t^{noisy} = s_t^{noiseless} \oplus s_t^{noise}.
\end{align}
The BP decoder outputs an inferred error:
\begin{align}
	e_t^{inf.} = \text{BP}_{\text{dec.}} (s_t^{noisy})
\end{align}
and the residual error is updated:
\begin{align}
	e_t^{res.} = e_{t-1}^{res.} \oplus e_t^{noise} \oplus e_t^{inf.}
\end{align}

For the last round, $t=T$, we assume perfect measurements and therefore have $s_T^{noise}=0$.
If the weight of the syndrome after the BP correction of this last round is zero and the residual error~$e_T^{res.}$ is not a logical error, we say that the decoding succeeded and otherwise, that it has failed.  
\Cref{fig:bp_plot_T} shows the statistical frequency of unsuccessful decoding against the physical error rate for $T = 5$. Note that the noiseless measurement scenario corresponds to $T=1$.
We see that increasing the system size decreases the logical error probability for a physical/syndrome error probability of $p=q$ up to about $4\%$.

In \Cref{fig:bp_plot_19584} we show the results of running the BP decoder on the same code ($n=19,584$) for different number of iterations~$T$.
The performance becomes worse as $T=1$ is essentially the noiseless case, however the recession of curves appears to recede with the number of time steps~$T$.

The threshold of the surface code under the same error model, but having to repeat the syndrome measurement for $d$ rounds and using a decoder with much less favourable computational complexity, is about $3\%$~\cite{wang}.
Clearly, this is not a fair comparison, as the ckeck weight of the $\{5,3,3,5\}$-code is three times higer than the one of the surface code.
However, one also needs to factor in the linear encoding rate and hence the reduction in overhead, i.e. the number of physical qubits that need to be spend to obtain a a given number of logical qubits and for a desired suppression of logical errors~\cite{semihyperbolic}.
We leave a detailed analysis as an open problem for future study.

\begin{figure}[h]
	\centering
	\subfloat[noiseless syndrome ($T=1$)]{\includegraphics[width=0.49\linewidth]{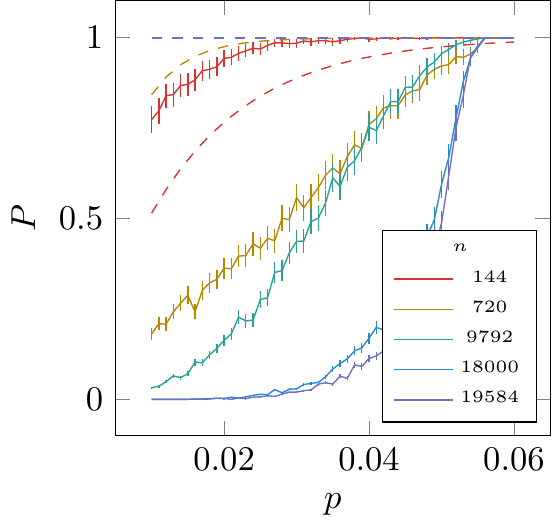}\label{fig:bp_plot_n}}
	\hfil
	\subfloat[noisy syndrome with $T=5$]{\includegraphics[width=0.49\linewidth]{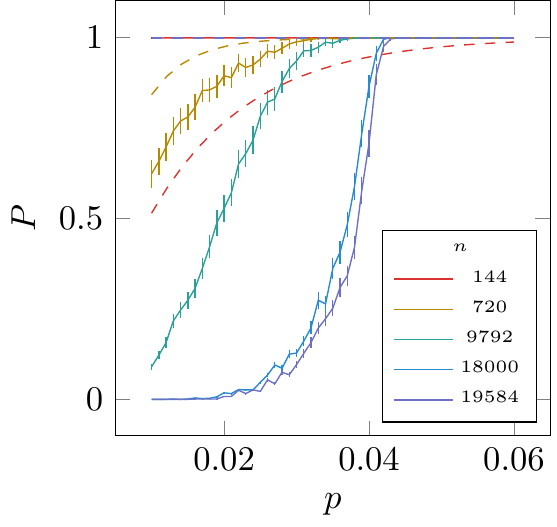}\label{fig:bp_plot_T}}
	
	\subfloat[$n=19,584$]{\includegraphics[width=0.49\linewidth]{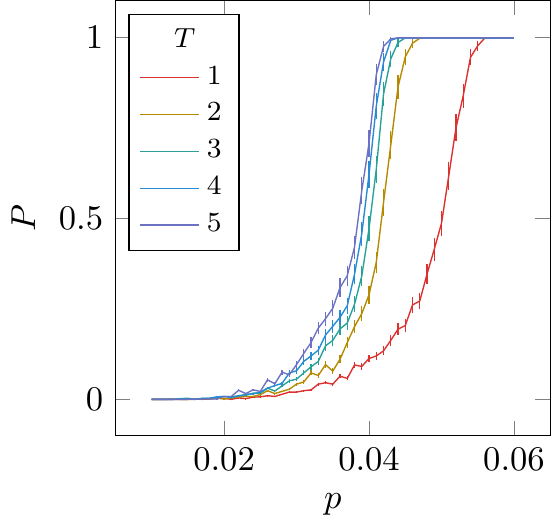}\label{fig:bp_plot_19584}}
	\caption{(a) Performance of BP decoder for a single time step $T=1$. We see that errors are suppressed in the system size for $p<0.5$. The curves cross very close to $P=1$. The dashed lines indicate the error probability if we were to take~$k$ unencoded qubits $1-(1-p)^k$. Vertical error bars correspond to the approximate 95 \% confidence interval given by $p = \hat{p} \pm 1.96 \sqrt{ \hat{p} (1-\hat{p}) / n_{\text{trials}} }$ where $\hat{p}$ is the mean. Here $n_{\text{trials}} = 1000$ for each physical error rate and each quantum code.
	(b)~Increasing the number of time steps to~$T=5$ lets the curves recede backwards. The pseudo-threshold for the two largest codes is around $4\%$. For the codes with more than 10 000 qubits, the transition between the successful and the unsuccessful decoding is quite sharp and gives numerical evidence for a noiseless threshold above 5\%.
	(c)~Performance curves for fixed system size $n=19,584$. The recession of the curves appears to slow down with the number of time steps~$T$.}
	\label{fig:bp_plot}
\end{figure}

\section{Conclusion}
We have shown how to construct quantum codes with a constant encoding rate and polynomial scaling distance from regular tessellations of four-dimensional hyperbolic manifolds.
Some of the manifolds we constructed were known, but many have (to the best of our knowledge) not appeared in previous literature.
We focussed on a particular tessellation of hyperbolic 4-space called the $\{5,3,3,5\}$ tessellation.
The resulting code family has an asymptotic encoding rate of $k/n \rightarrow 13/72 = 0.180...$, stabilizer checks of weight~12 and distance scaling polynomially as $\Theta(n^\epsilon)$.
For the construction based on linear representations it can be shown that $\epsilon \leq 0.3$, but we would like to stress that~$\epsilon$ may be higher when considering general $\{5,3,3,5\}$-codes.

Future work -- A drawback of our construction is the high stabilizer weight.
It was shown in~\cite{hastings_weight} that by refining the primal and dual tessellation of a homological code in a controlled way one can reduce the stabilizer weight while keeping the asymptotic code parameters invariant.
A related procedure is refining the hyperbolic tessellation using a euclidean tessellation as done for 2D hyperbolic codes in~\cite{semihyperbolic}.
It has to be determined how to do this in a systematic way in our construction.
Alternatively, it may be possible to define a subsystem version in which stabilizers can be measured by low-weight gauge operators.

We are confident that the decoding performance may be increased by considering more sophisticated decoding algorithms.
For example, it would be worthwile to combine the BP decoder with an ordered statistics decoder (OSD) to help in cases where BP alone would fail~\cite{panteleev}.
Another interesting avenue of research would be determining the threshold of the maximum-likelihood decoder by analyzing the associated 4D hyperbolic random-plaquette gauge model, as previously done for homological codes in euclidean space~\cite{DKLP,wang,chubb}.
However, it seems to be non-trivial to find a suitable order parameter, as bulk-boundary scaling works differently in hyperbolic geometry.

Finally, we hope to perform a detailed comparison to currently favoured quantum fault-tolerance schemes such as the ones based on the surface code.
The logical operators of hyperbolic codes share their support, making them much more efficient in terms of physical qubits for a fixed logical error rate~\cite{semihyperbolic}.
We blieve that the numerical single-shot performance, the simplified decoding and the high encoding rate make this LDPC code family highly competitive in situations where planarity is not an issue, such as in modular quantum computing architectures.

\appendices

\section{The Ideals of $\mathbb{Z}[\xi]$}\label{sec:ideal_types}

In \Cref{sec:constructionPSL} we discuss a construction based on a representation of the symmetry group of the $\{5,3,3,5\}$ tessellation.
The coefficients of the matrices in this representation contain linear combinations of powers of the golden ratio $\phi$.
The ring of these elements is denoted $\mathbb{Z}[\phi]$.
To obtain finite examples we map this representation to one over finite fields.
This is achieved by factoring out maximal ideals of $\mathbb{Z}[\phi]$.
An ideal is \emph{maximal} if it is a proper subset of the ring and all other ideals are contained in it.
It is well-known that the quotient of a ring with respect to a ideal is a field if and only if that ideal is maximal.
In what follows we discuss a slightly more setting where the ideals are prime.
An ideal $I$ is \emph{prime} if for any $a$ and $b$ with $ab \in I$ it holds that either $a\in I$ or $b\in I$.

We characterize the ideals of any ring of integers~$\mathbb{Z}[\xi]$ to which an element $\xi$ is added.
Let $h$ be the minimal polynomial of~$\xi$ in~$\mathbb{Z}[\xi]$.
As in \Cref{sec:matrix_entries} we will directly identify~$\mathbb{Z}[\xi]$ with~$\mathbb{Z}[x]/\langle h \rangle$.

\begin{lem}\label{lem:prime_ideal}
	Let $h\in \mathbb{Z}[x]$ be an irreducible polynomial and~$I$ a non-zero prime ideal of~$\mathbb{Z}[x]/\langle h \rangle$. Then~$I$ must contain a unique prime number~$p$.
\end{lem}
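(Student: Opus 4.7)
The plan is to reduce the problem to showing that the contraction $I \cap \mathbb{Z}$ (viewing $\mathbb{Z}$ inside $R := \mathbb{Z}[x]/\langle h \rangle$ via the constant embedding) is a non-zero ideal of $\mathbb{Z}$. Once this is established, uniqueness is automatic: $I \cap \mathbb{Z}$ is the preimage of $I$ under the ring map $\mathbb{Z} \to R$, hence is itself a prime ideal of $\mathbb{Z}$; since the non-zero prime ideals of $\mathbb{Z}$ are exactly $\langle p \rangle$ for $p$ a rational prime, there is a unique such $p$ inside $I$.

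The main step is therefore to produce a non-zero integer that lies in $I$. Pick any non-zero element $\alpha \in I$ and lift it to a polynomial $f \in \mathbb{Z}[x]$, so that $\alpha = f + \langle h \rangle$ and $h \nmid f$ in $\mathbb{Z}[x]$. In our setting $h$ is the minimal polynomial of $\xi$, so $h$ is monic (in particular primitive), and by Gauss's lemma irreducibility in $\mathbb{Z}[x]$ upgrades to irreducibility in $\mathbb{Q}[x]$. Hence $f$ and $h$ are coprime in the principal ideal domain $\mathbb{Q}[x]$, and Bézout yields $A(x), B(x) \in \mathbb{Q}[x]$ with
\begin{equation*}
A(x)\, f(x) + B(x)\, h(x) = 1.
\end{equation*}
Clearing denominators gives $a(x), b(x) \in \mathbb{Z}[x]$ and a positive integer $m$ such that $a(x)\, f(x) + b(x)\, h(x) = m$. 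Reducing this identity modulo $\langle h \rangle$ shows that $a(x)\cdot \alpha = m$ in $R$, so $m \in I$. Factor $m = p_1 \cdots p_r$ into rational primes; because $I$ is a prime ideal of $R$, at least one of the $p_i$ must lie in $I$, giving the desired prime integer.

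Uniqueness can also be phrased directly: if two distinct primes $p \neq q$ both lay in $I$, then an integer Bézout relation $ap + bq = 1$ with $a,b \in \mathbb{Z}$ would force $1 \in I$, contradicting the fact that a prime ideal is proper. The only non-routine step is the passage from a general element of $I$ to an integer, which is exactly where Gauss's lemma and the irreducibility hypothesis on $h$ are used; the rest reduces to standard facts about prime ideals of $\mathbb{Z}$.
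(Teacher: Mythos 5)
Your proof is correct and follows essentially the same route as the paper: lift a nonzero element of $I$ to $\mathbb{Z}[x]$, use irreducibility of $h$ together with a B\'ezout identity (in $\mathbb{Q}[x]$, then clearing denominators) to produce a nonzero integer in $I$, and use primality of $I$ to extract a unique rational prime. The paper states the B\'ezout step more tersely, writing $uh + vg = p$ directly with $u,v \in \mathbb{Z}[x]$, but the underlying argument is identical.
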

\begin{proof}
	Let $\tilde{I}$ be the preimage of $I$ under the natural epimorphism.
	Since $I\neq \langle 0 \rangle$ by assumption, $\tilde{I}$ must contain a $g\in \mathbb{Z}[x]$ which is not a multiple of $h$.
	Since $h$ is irreducible we must have $\gcd(h,g)=1$.
	By B\'ezout's identity there exist polynomials $u,v\in \mathbb{Z}[x]$ such that $uh+vg=p$, where~$p$ is a positive integer such that the $\gcd$ of~$p$ and the coefficients of~$u$ and~$v$ is~1.
	Since $I$ is a prime ideal, so is $I\cap\mathbb{Z}$ and since $p\in I$ we have that~$p$ must be a prime number.
\end{proof}

\begin{lem}\label{lem:finite_field}
	Let $h\in \mathbb{Z}[x]$ be an irreducible polynomial and~$I$ a non-zero prime ideal of~$\mathbb{Z}[\xi] = \mathbb{Z}[x]/\langle h \rangle$. Then the ring $\mathbb{Z}[\xi]/I$ is a finite field and~$I$ is a maximal ideal.
\end{lem}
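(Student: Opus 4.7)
The plan is to leverage \Cref{lem:prime_ideal} to reduce the problem to a finite quotient, then use the classical fact that a finite integral domain must be a field.

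First I would invoke \Cref{lem:prime_ideal} to obtain a prime number $p \in I$, so that $\langle p \rangle \subseteq I$. Passing to the quotient, we have a surjection $\mathbb{Z}[\xi]/\langle p \rangle \twoheadrightarrow \mathbb{Z}[\xi]/I$, so it suffices to understand $\mathbb{Z}[\xi]/\langle p \rangle$. Using the identification $\mathbb{Z}[\xi] = \mathbb{Z}[x]/\langle h \rangle$, the third isomorphism theorem for rings gives
\begin{equation*}
\mathbb{Z}[\xi]/\langle p \rangle \;\cong\; \mathbb{Z}[x]/\langle h, p \rangle \;\cong\; \mathbb{F}_p[x]/\langle \bar h \rangle,
\end{equation*}
where $\bar h$ is the image of $h$ in $\mathbb{F}_p[x]$. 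Since $\mathbb{F}_p[x]/\langle \bar h \rangle$ is a vector space over $\mathbb{F}_p$ of dimension at most $\deg h$, this ring has at most $p^{\deg h}$ elements. Consequently $\mathbb{Z}[\xi]/I$, being a quotient of a finite ring, is itself finite.

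Next I would observe that $\mathbb{Z}[\xi]/I$ is an integral domain, which follows immediately from the hypothesis that $I$ is prime. Finally, I would apply the standard result that any finite integral domain is a field: given any nonzero $a$, the multiplication-by-$a$ map is injective on the finite set $\mathbb{Z}[\xi]/I$, hence surjective, so $a$ has a multiplicative inverse. Therefore $\mathbb{Z}[\xi]/I$ is a field, and since a quotient ring is a field exactly when the ideal is maximal, $I$ is maximal.

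There is no real obstacle here; the argument is essentially a two-line assembly of \Cref{lem:prime_ideal} with the ``finite integral domain is a field'' lemma. The only thing to be careful about is cleanly justifying that $\mathbb{Z}[x]/\langle h, p\rangle$ is finite, which comes down to noting that reducing coefficients mod $p$ and then mod $\bar h$ produces a finite-dimensional $\mathbb{F}_p$-algebra.
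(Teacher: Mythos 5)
Your proof is correct and follows essentially the same route as the paper's: invoke \Cref{lem:prime_ideal} to find a prime $p\in I$, deduce finiteness of $\mathbb{Z}[\xi]/I$ from the bounded degree of $h$, and conclude with the standard ``finite integral domain is a field'' argument. The only cosmetic differences are that you make the finiteness step more explicit via the third isomorphism theorem, and you prove the finite-integral-domain fact via injectivity-implies-surjectivity whereas the paper uses the pigeonhole identity $a^m=a^n$; both are standard and interchangeable.
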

\begin{proof}
	Since $I$ contains a prime number by \Cref{lem:prime_ideal} and since~$h$ (the minimal polynomial of $\xi$) has finite degree we have $|\mathbb{Z}[\xi]/I| < \infty$.
	Furthermore, since $I$ is a prime ideal we know that $\mathbb{Z}[\xi]/I$ has no zero divisors.
	
	We will now show that every non-zero element $a\in \mathbb{Z}[\xi]/I$ has a multiplicative inverse.
	Since $\mathbb{Z}[\xi]/I$ is finite there must exist positive integers $m$ and $n$ with $m>n$ such that $a^m = a^n$, which is equivalent to $a^n (a^{m-n}-1)=0$.
	Since $\mathbb{Z}[\xi]/I$ has no zero divisors this means $a^{m-n}=1$ which implies $a\cdot a^{m-n-1}=1$. Thus $a^{m-n-1}$ is the multiplicative inverse of~$a$.
\end{proof}

If $p$ is the prime number from \Cref{lem:prime_ideal} then $\mathbb{Z}[\xi]/I$ is a finite field of characteristic $p$.

\begin{thm}\label{thm:ideal_types}
	Let $h\in \mathbb{Z}[x]$ be an irreducible polynomial and~$I$ a non-zero prime ideal of~$\mathbb{Z}[\xi] = \mathbb{Z}[x]/\langle h \rangle$.
	Any prime ideal~$I$ of $\mathbb{Z}[\xi]=\mathbb{Z}[x]/\langle h \rangle$ is equal to 
	\begin{enumerate}
		\item $\langle p \rangle$, if $h$ is irreducible in $\mathbb{F}_p[x]$ or
		\item $\langle p, g(\xi) \rangle$, if $h$ is not irreducible in $\mathbb{F}_p[x]$,
	\end{enumerate}
	where  $p$ is a  prime number and $g$ is an irreducible factor of~$h\mod p$.
\end{thm}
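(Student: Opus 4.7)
The plan is to exploit the two preceding lemmas together with the ideal correspondence theorem. By Lemma~\ref{lem:prime_ideal}, any non-zero prime ideal $I \subset \mathbb{Z}[\xi]$ contains a unique prime number $p$, hence $\langle p \rangle \subseteq I$. So $I$ corresponds, via the canonical projection $\pi : \mathbb{Z}[\xi] \twoheadrightarrow \mathbb{Z}[\xi]/\langle p \rangle$, to a prime ideal $\bar{I} := \pi(I)$ of the quotient ring
\begin{align}
\mathbb{Z}[\xi]/\langle p \rangle \;\cong\; \mathbb{Z}[x]/\langle h, p \rangle \;\cong\; \mathbb{F}_p[x]/\langle \bar{h} \rangle,
\end{align}
where $\bar{h} := h \bmod p$. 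Since the correspondence is inclusion-preserving, classifying $I$ reduces to classifying the prime ideals of $\mathbb{F}_p[x]/\langle \bar{h} \rangle$ and then pulling them back to $\mathbb{Z}[\xi]$.

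For case~1, assume $\bar{h}$ is irreducible in $\mathbb{F}_p[x]$. Because $\mathbb{F}_p[x]$ is a principal ideal domain, $\langle \bar{h} \rangle$ is maximal, so $\mathbb{F}_p[x]/\langle \bar{h} \rangle$ is a field. Its only prime ideal is $\langle 0 \rangle$, which pulls back to $\langle p \rangle$ in $\mathbb{Z}[\xi]$. Since $I \supseteq \langle p \rangle$ and $\pi(I) = \langle 0 \rangle$, we conclude $I = \langle p \rangle$.

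For case~2, factor $\bar{h} = g_1 g_2 \cdots g_r$ into irreducibles in $\mathbb{F}_p[x]$ (some possibly repeated). Again using that $\mathbb{F}_p[x]$ is a PID, the prime ideals of $\mathbb{F}_p[x]/\langle \bar{h}\rangle$ are exactly the images of the ideals $\langle g_i \rangle$ for the distinct irreducible factors, since a prime in the quotient must contain one of the $g_i$ (as $g_1 \cdots g_r \equiv 0$ forces some $g_i$ into any prime ideal). Pulling back, $I$ must contain $p$ and some $g_i(\xi)$, giving $\langle p, g_i(\xi) \rangle \subseteq I$. Conversely, the reverse inclusion follows because $\langle p, g_i(\xi)\rangle$ is already maximal: its quotient is $\mathbb{F}_p[x]/\langle g_i \rangle$, which is a field (here we use Lemma~\ref{lem:finite_field} implicitly, or just the PID argument again). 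Hence $I = \langle p, g(\xi)\rangle$ for an irreducible factor $g$ of $\bar{h}$.

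The only subtle step is making sure the correspondence between prime ideals of $\mathbb{F}_p[x]/\langle \bar{h}\rangle$ and irreducible factors of $\bar{h}$ is stated cleanly, especially when $\bar{h}$ has repeated factors; the point is that a prime ideal cannot distinguish $g_i$ from $g_i^k$, so only the distinct irreducible factors matter. Everything else is bookkeeping with the lattice isomorphism theorem.
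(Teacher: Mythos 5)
Your proof is correct, and it takes a route that is related to but not identical with the paper's. The paper begins by invoking \Cref{lem:finite_field} to establish that $\mathbb{Z}[\xi]/I$ is a finite field generated as a ring by $\pi_I(\xi)$; then, using \Cref{lem:prime_ideal} to locate the prime $p$, it asserts directly that $I$ is generated by $p$ together with the minimal polynomial $g$ of $\pi_I(\xi)$ over $\mathbb{F}_p$, and obtains the two cases according to whether $g \equiv h \bmod p$. Your argument takes \Cref{lem:prime_ideal} as the main lever, passes to the intermediate quotient $\mathbb{Z}[\xi]/\langle p \rangle \cong \mathbb{F}_p[x]/\langle \bar h\rangle$ via the correspondence theorem, and then reads off the prime ideals there using the PID structure of $\mathbb{F}_p[x]$. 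So the paper works with the image field $\mathbb{Z}[\xi]/I$ while you work with the domain side, classifying ideals above $\langle p\rangle$. The two viewpoints are dual and lead to the same bookkeeping, but your version has two advantages worth noting: it makes the case split completely mechanical (the quotient is a field iff $\bar h$ is irreducible), and it supplies explicit justification for the step the paper states tersely as ``$I$ must be generated by $p$ and the minimal polynomial $g$ of $\pi_I(\xi)$.'' Your remark that repeated irreducible factors of $\bar h$ do not produce extra prime ideals is also a detail the paper's proof leaves implicit.
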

\begin{proof}
	Let $\pi_I : \mathbb{Z}[\xi] \rightarrow \mathbb{Z}[\xi]/I$ be the quotient map.
	From \Cref{lem:finite_field} we know that $\mathbb{Z}[\xi]/I$ is a finite field and we note that it is generated (as a ring) by~$\pi_I(\xi)$.
	By \Cref{lem:prime_ideal} we have that $I$ must contain a unique prime number $p$.
	Hence, $I$ must be generated by $p$ and the minimal polynomial~$g$ of~$\pi_I(\xi)$.
	As~$\xi$ is a root of~$h$ in~$\mathbb{Z}[\xi]$ by construction, we must have that~$g$ divides $h \mod p$.
	The two cases in the statement of the theorem follow, depending on whether $g \equiv h \mod p$.
\end{proof}

\section{Quotient Condition}\label{sec:quotient_condition}

We use the notation of~\cite{abstractregularpolytpes}: let $G$ be a symmetry group generated by $(r_i)_{i \in \{0, …, 4\}}$. For $i \in \{0, …, 4\}$, let $S_i$ be the subgroup of $G$ generated by $(r_j)_{j \in \{0, …, 4\} \backslash \{i\} }$. Let $\mathcal{P}$ be the polytope associated with $G$ and $(S_i)_{i \in \{0, …, 4\}}$.  Let $H$ be a subgroup of $G$. We are interested in a condition sufficient to define a quantum code associated with the quotient polytope~$\mathcal{P}/H$. 

The orbit of an i-face $F_a = g_a S_i$ under the action of H is $\{ h g_a S_i \, | \, h \in H \}$. By definition this orbit is a face of the quotient abstract polytope. We denote it by $H F_a$. In terms of elements of $\Gamma$, It corresponds to the double coset $H g_a S_i$.\\
We use the same incidence definition: $H F_a$ and $H F_b$ are incident if $H g_a S_i \cap H g_b S_j \neq \varnothing$. \\

We want to find a condition under which quotienting on the right by $S_i \, , \, i \in \{0, ..., 4\}$ ``does not interact'' with quotienting on the left by $H$. More formally the following so-called non-local subgroup condition is sufficient to prove the lifting \Cref{lem:lifting_lemma}.

\begin{defi}[The non-local subgroup condition]
	We say that a subgroup of the symmetry group of the tessellation $G$ fullfills the \emph{non-local subgroup condition} if for any $i, j \in \{0, ..., 4\}$ and for all $g \in G$ we have
	\begin{equation}\label{eqn:NLC}
		gHg^{-1} \cap S_i S_j  = \{ \text{id} \}.
	\end{equation}
\end{defi}

The term non-local refers to the subgroup H: since the subgroups $S_i \, , \, i \in \{0, ..., 4\}$ are ``local'' (with respect for instance to the distance in the Caley graph $\left(\Gamma \, , \, (r_i)_{i \in \{0, ..., 4\}}\right)$, the subgroup H has to be non-local in order to not interact with the $S_i \, , \, i \in \{0, ..., 4\}$.

\begin{lem}[Lifting of $S_i$ cosets] \label{lem:lifting_lemma}
	Let H be a subgroup of a string C-group G of rank n satisfying the non-local subgroup condition (\Cref{eqn:NLC}). For $i, j \in \{0, ..., n\}$, let $H F_i$ be an $i$-face of $H \backslash \mathcal{P}_G$ and let $H F_j$ be a $j$-face of $H \backslash \mathcal{P}_G$ incident to $H F_j$.  \\
	For any $i$-face $K_i$ of $\mathcal{P}$ such that $H K_i = H F_i$, there exists a unique face $K_j$ of $\mathcal{P}$ such that $H K_j = H F_j$ and $K_j$ is incident to $K_i$.
\end{lem}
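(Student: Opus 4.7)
The plan is to split into existence and uniqueness, in both cases by unwinding the coset description of faces and the definition of incidence in the quotient. Write $K_i = k_i S_i$, $HF_i = H g_i S_i$ and $HF_j = H g_j S_j$. The hypothesis $H K_i = H F_i$ says $k_i \in H g_i S_i$; fix a decomposition $k_i = h_0 g_i s_0$ with $h_0 \in H$ and $s_0 \in S_i$.

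For existence, the incidence of $HF_i$ and $HF_j$ in $\mathcal{P}/H$ provides an element $x \in H g_i S_i \cap H g_j S_j$; write $x = h_1 g_i s_i^* = h_2 g_j s_j^*$ with $h_1,h_2 \in H$, $s_i^* \in S_i$, $s_j^* \in S_j$. Left-multiplication by $h_0 h_1^{-1}$ sends $x$ to $h_0 g_i s_i^* = k_i s_0^{-1} s_i^* \in K_i$ (from the first expression) and simultaneously into the coset $(h_0 h_1^{-1} h_2 g_j)\, S_j$ (from the second). Taking $K_j := (h_0 h_1^{-1} h_2 g_j)\, S_j$ produces a lift of $HF_j$ that shares the point $h_0 h_1^{-1} x$ with $K_i$, so $K_j$ is incident to $K_i$ in $\mathcal{P}$.

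For uniqueness, suppose $K_j = m_j S_j$ and $K_j' = m_j' S_j$ both lift $HF_j$ and are both incident to $K_i$. The equality $H m_j S_j = H m_j' S_j$ of double cosets yields $m_j' = h m_j \sigma$ for some $h \in H$, $\sigma \in S_j$, while the two incidences give $k_i u' = m_j v'$ and $k_i u = m_j' v$ for some $u, u' \in S_i$ and $v, v' \in S_j$. Eliminating $m_j$ and $m_j'$ and rearranging yields
\begin{equation*}
(k_i u)^{-1} h^{-1} (k_i u) \;=\; u^{-1} u' \cdot (v')^{-1} \sigma v \;\in\; S_i S_j .
\end{equation*}
Setting $g := k_i u$, the left-hand side lies in $g^{-1} H g \cap S_i S_j$, which is trivial by the non-local subgroup condition \Cref{eqn:NLC} applied with conjugating element $g^{-1}$. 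Hence $h = e$, so $m_j' = m_j \sigma \in m_j S_j$, giving $K_j' = K_j$.

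The main obstacle is the algebraic bookkeeping in uniqueness: a naive substitution only places $k_i^{-1} h^{-1} k_i$ in the triple product $S_i S_j S_i$, which is strictly larger than the $S_i S_j$ that \Cref{eqn:NLC} controls. The key trick is to conjugate by $k_i u$ rather than by $k_i$, absorbing one stray $S_i$-factor into the conjugating element so that the non-local subgroup condition applies verbatim; everything else in the argument is a routine expansion of cosets and incidence.
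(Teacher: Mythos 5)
Your proof is correct and follows essentially the same strategy as the paper's: existence by tracking a representative of $HF_j$ through the incidence equation, and uniqueness by conjugating so that the excess factor is absorbed into the conjugating element, putting the conjugate inside $S_iS_j$ where the non-local subgroup condition applies. The only cosmetic difference is that the paper conjugates by $g_js_j$ to absorb a stray $S_j$-factor, whereas you conjugate by $k_iu$ to absorb a stray $S_i$-factor, but the algebraic mechanism is identical.
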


\begin{proof}
	There exists $g_i \in G$ such that $K_i = g_i S_i$. Then $H K_i = H F_i = H g_i S_i$. There exists $g_j \in G$ such that $H F_j = H g_j S_j$. There exist $h_i, h_j \in H$, $s_i \in S_i$ and $s_j \in S_j$ such that $h_i g_i s_i = h_j g_j s_j$. Therefore
	
	\begin{equation} \label{eqn:proof_a}
	g_i s_i = h_i^{-1} h_j g_j s_j
	\end{equation}
	
	We can define $K_j = h_i^{-1} h_j g_j S_j$. Clearly $H K_j = H F_j$ and $K_j$ is incident to $K_i$. \\
	
	To prove uniqueness suppose that a face $L_j$ of $\mathcal{P}$ satisfies $H L_j = H F_j$ and $L_j$ is incident to $K_i$. There exists $g \in G$ such that $L_j = g S_j$. 
	Since $H g S_j = H g_j S_j$, there exists $h \in H$ and $s''_j \in S_j$ such that
	
	\begin{equation} \label{eqn:proof_b}
	g = h g_j s''_j
	\end{equation}
	
	Since $L_j$ is incident to $K_i$, there exists $s'_i \in S_i$ and $s'_j \in S_j$ such that $g s'_j = g_i s'_i$. 
	Using \Cref{eqn:proof_a,eqn:proof_b}, we obtain $h g_j s''_j s'_j = h_i^{-1} h_j g_j s_j s_i^{-1} s'_i$. 
	We can rewrite this as $s_j^{-1} g_j^{-1} h_j^{-1} h_i h g_j s_j = s_i^{-1} s'_i (s'_j)^{-1} (s''_j)^{-1} s_j$. 
	
	Defining $\bar{g} = g_j s_j, \quad \bar{h} = h_j^{-1} h_i h, \quad \bar{s_i} = s_i^{-1} s'_i \quad  \text{and} \quad \bar{s_j} = (s'_j)^{-1} (s''_j)^{-1} s_j$, we have
	
	\begin{equation}
	\notag
	\bar{g}^{-1} \bar{h} \bar{g} = \bar{s_i} \bar{s_j}.
	\end{equation}
	
	Using the non-local subgroup condition \Cref{eqn:NLC}, it implies that $\bar{g}^{-1} \bar{h} \bar{g} = \text{id}$ and therefore that $\bar{h} = \text{id}$. 
	We have proven that $h = h_i^{-1} h_j$, which means that $L_j = K_j$.
\end{proof}

Thus under the non-local subgroup condition \Cref{eqn:NLC} we can use the lifting  \Cref{lem:lifting_lemma} to prove that the orthogonality of the parity-check matrices is preserved by such quotients: 

Let $H F_{i-1}$ be an $(i-1)$-face and $H F_{i+1}$ be an $(i+1)$-face of the quotient polytope. Let $\{ \tilde{F}_{i_1}, ..., \tilde{F}_{i_n} \}$ be the collection of $i$-faces incident to both $H F_{i-1}$ and $H F_{i+1}$. Under the non-local subgroup condition \Cref{eqn:NLC}, the  \Cref{lem:lifting_lemma} shows that there exist faces of the covering polytope $K_{i-1}$ covering $H F_{i-1}$, $K_{i+1}$ covering $H F_{i+1}$ and $\forall k \in \{1, ..., n\}, K_{i_k}$ covering $H F_{i_k}$ such that $\{ K_{i_1}, ..., K_{i_n} \}$ is the collection of $i$-faces incident to both $K_{i-1}$ and $K_{i+1}$. \\
The preservation of the orthogonality of parity check matrices would follow immediately from this. 

\section*{Acknowledgment}
NPB is supported by his UCLQ Fellowship.
NPB would like to thank Jens Eberhardt for helpful discussions on the ideals of $\mathbb{Z}[\phi]$, Leonid Pryadko for suggesting the randomized search procedure to obtain low-weight logical operators, Friedrich Rober for discussions on the LINS package in \textsc{GAP} and Dima Pasechnik for pointing out reference \cite{abstractregularpolytpes}.
VL would like to thank Antoine Grospellier, Lucien Grouès  and Anthony Leverrier for useful discussions on Belief Propagation.

\ifCLASSOPTIONcaptionsoff
  \newpage
\fi

\bibliographystyle{IEEEtran}
\bibliography{IEEEabrv,hyperbolic4D.bib}

\begin{thebibliography}{10}
\providecommand{\url}[1]{#1}
\csname url@samestyle\endcsname
\providecommand{\newblock}{\relax}
\providecommand{\bibinfo}[2]{#2}
\providecommand{\BIBentrySTDinterwordspacing}{\spaceskip=0pt\relax}
\providecommand{\BIBentryALTinterwordstretchfactor}{4}
\providecommand{\BIBentryALTinterwordspacing}{\spaceskip=\fontdimen2\font plus
\BIBentryALTinterwordstretchfactor\fontdimen3\font minus
  \fontdimen4\font\relax}
\providecommand{\BIBforeignlanguage}[2]{{%
\expandafter\ifx\csname l@#1\endcsname\relax
\typeout{** WARNING: IEEEtran.bst: No hyphenation pattern has been}%
\typeout{** loaded for the language `#1'. Using the pattern for}%
\typeout{** the default language instead.}%
\else
\language=\csname l@#1\endcsname
\fi
#2}}
\providecommand{\BIBdecl}{\relax}
\BIBdecl

\bibitem{RAMpower}
A.~Sch{\"o}nhage, ``On the power of random access machines,'' in
  \emph{International Colloquium on Automata, Languages, and
  Programming}.\hskip 1em plus 0.5em minus 0.4em\relax Springer, 1979, pp.
  520--529.

\bibitem{shor_code}
P.~W. Shor, ``Scheme for reducing decoherence in quantum computer memory,''
  \emph{Physical review A}, vol.~52, no.~4, p. R2493, 1995.

\bibitem{BK_surface_code}
S.~B. Bravyi and A.~Y. Kitaev, ``Quantum codes on a lattice with boundary,''
  \emph{arXiv preprint quant-ph/9811052}, 1998.

\bibitem{FM_surface_code}
M.~H. Freedman and D.~A. Meyer, ``Projective plane and planar quantum codes,''
  \emph{Foundations of Computational Mathematics}, vol.~1, no.~3, pp. 325--332,
  2001.

\bibitem{fetaya_bound}
E.~Fetaya, ``Bounding the distance of quantum surface codes,'' \emph{Journal of
  Mathematical Physics}, vol.~53, no.~6, p. 062202, 2012.

\bibitem{delfosse_bound}
N.~Delfosse, ``Tradeoffs for reliable quantum information storage in surface
  codes and color codes,'' in \emph{2013 IEEE International Symposium on
  Information Theory}.\hskip 1em plus 0.5em minus 0.4em\relax IEEE, 2013, pp.
  917--921.

\bibitem{zemor}
G.~Z{\'e}mor, ``On cayley graphs, surface codes, and the limits of homological
  coding for quantum error correction,'' in \emph{International Conference on
  Coding and Cryptology}.\hskip 1em plus 0.5em minus 0.4em\relax Springer,
  2009, pp. 259--273.

\bibitem{guth_lubotzky}
L.~Guth and A.~Lubotzky, ``{Q}uantum {E}rror {C}orrecting {C}odes and
  4-dimensional {A}rithmetic {H}yperbolic {M}anifolds,'' \emph{Journal of
  Mathematical Physics}, vol.~55, no.~8, p. 082202, 2014.

\bibitem{hyperbolic2d}
N.~P. Breuckmann and B.~M. Terhal, ``Constructions and noise threshold of
  hyperbolic surface codes,'' \emph{IEEE transactions on Information Theory},
  vol.~62, no.~6, pp. 3731--3744, 2016.

\bibitem{DKLP}
E.~Dennis, A.~Kitaev, A.~Landahl, and J.~Preskill, ``Topological quantum
  memory,'' \emph{Journal of Mathematical Physics}, vol.~43, no.~9, pp.
  4452--4505, 2002.

\bibitem{hp_numerical}
A.~Grospellier and A.~Krishna, ``Numerical study of hypergraph product codes,''
  \emph{arXiv preprint arXiv:1810.03681}, 2018.

\bibitem{hastings_decoder}
M.~B. Hastings, ``Decoding in hyperbolic spaces: Ldpc codes with linear rate
  and efficient error correction,'' \emph{arXiv preprint arXiv:1312.2546},
  2013.

\bibitem{golden_codes}
V.~Londe and A.~Leverrier, ``Golden codes: quantum ldpc codes built from
  regular tessellations of hyperbolic 4-manifolds,'' \emph{arXiv preprint
  arXiv:1712.08578}, 2017.

\bibitem{homological_codes_thesis}
N.~P. Breuckmann, ``Homological quantum codes beyond the toric code,'' Ph.D.
  dissertation, RWTH Aachen University, 2017.

\bibitem{belolipetsky}
M.~Belolipetsky, ``On volumes of arithmetic quotients of {SO}(1,n),''
  \emph{Annali della Scuola Normale Superiore di Pisa-Classe di Scienze},
  vol.~3, no.~4, pp. 749--770, 2004.

\bibitem{conder2005}
M.~Conder and C.~Maclachlan, ``Compact hyperbolic 4-manifolds of small
  volume,'' \emph{Proceedings of the American Mathematical Society}, vol. 133,
  no.~8, pp. 2469--2476, 2005.

\bibitem{bombin_singleshot}
H.~Bomb{\'\i}n, ``Single-shot fault-tolerant quantum error correction,''
  \emph{Physical Review X}, vol.~5, no.~3, p. 031043, 2015.

\bibitem{grinstein}
G.~Grinstein, ``Can complex structures be generically stable in a noisy
  world?'' \emph{IBM Journal of Research and Development}, vol.~48, no.~1, pp.
  5--12, 2004.

\bibitem{nakahara}
M.~Nakahara, \emph{{G}eometry, {T}opology and {P}hysics}.\hskip 1em plus 0.5em
  minus 0.4em\relax CRC Press, 2003.

\bibitem{anderson}
M.~T. Anderson, ``Complete minimal varieties in hyperbolic space,''
  \emph{Inventiones mathematicae}, vol.~69, no.~3, pp. 477--494, 1982.

\bibitem{gromov1982volume}
M.~Gromov, ``Volume and bounded cohomology,'' \emph{Publications
  Math{\'e}matiques de l'IH{\'E}S}, vol.~56, pp. 5--99, 1982.

\bibitem{sims}
C.~C. Sims, \emph{{C}omputation with {F}initely {P}resented {G}roups}.\hskip
  1em plus 0.5em minus 0.4em\relax Cambridge University Press, 1994, vol.~48.

\bibitem{ratcliffe}
J.~G. Ratcliffe, \emph{Foundations of hyperbolic manifolds}.\hskip 1em plus
  0.5em minus 0.4em\relax Springer, 1994, vol.~3.

\bibitem{abstractregularpolytpes}
P.~McMullen and E.~Schulte, \emph{Abstract Regular Polytopes}.\hskip 1em plus
  0.5em minus 0.4em\relax Cambridge University Press, 2002, vol.~92.

\bibitem{siran}
J.~{\v{S}}ir{\'a}{\v{n}}, ``Triangle {G}roup {R}epresentations and
  {C}onstructions of {R}egular {M}aps,'' \emph{Proceedings of the London
  Mathematical Society}, vol.~82, no.~3, pp. 513--532, 2001.

\bibitem{wilson}
R.~Wilson, \emph{The {F}inite {S}imple {G}roups}.\hskip 1em plus 0.5em minus
  0.4em\relax Springer Science \& Business Media, 2009, vol. 251.

\bibitem{davismanifold}
\BIBentryALTinterwordspacing
M.~W. Davis, ``A hyperbolic {$4$}-manifold,'' \emph{Proc. Amer. Math. Soc.},
  vol.~93, no.~2, pp. 325--328, 1985. [Online]. Available:
  \url{https://doi.org/10.2307/2044771}
\BIBentrySTDinterwordspacing

\bibitem{weber}
\BIBentryALTinterwordspacing
C.~Weber and H.~Seifert, ``{Die beiden Dodekaederr{\"a}ume},''
  \emph{Mathematische Zeitschrift}, vol.~37, no.~1, pp. 237--253, Dec 1933.
  [Online]. Available: \url{https://doi.org/10.1007/BF01474572}
\BIBentrySTDinterwordspacing

\bibitem{5335compact}
U.~Abresch and V.~Schroeder, ``Graph manifolds, ends of negatively curved
  spaces and the hyperbolic 120-cell space,'' \emph{Journal of Differential
  Geometry}, vol.~35, no.~2, pp. 299--336, 1992.

\bibitem{semihyperbolic}
N.~P. Breuckmann, C.~Vuillot, E.~Campbell, A.~Krishna, and B.~M. Terhal,
  ``{Hyperbolic and Semi-Hyperbolic Surface Codes for Quantum Storage},''
  \emph{Quantum Science and Technology}, vol.~2, no.~3, p. 035007, 2017.

\bibitem{local_decoders}
N.~P. Breuckmann, K.~Duivenvoorden, D.~Michels, and B.~M. Terhal, ``Local
  decoders for the 2d and 4d toric code,'' \emph{Quantum Information and
  Computation}, vol.~17, no. 3 and 4, pp. 0181--0208, 2017.

\bibitem{kubica_cellular}
A.~Kubica and J.~Preskill, ``Cellular-automaton decoders with provable
  thresholds for topological codes,'' \emph{arXiv preprint arXiv:1809.10145},
  2018.

\bibitem{richardson_urbanke}
T.~Richardson and R.~Urbanke, \emph{{M}odern {C}oding {T}heory}.\hskip 1em plus
  0.5em minus 0.4em\relax Cambridge {U}niversity {P}ress, 2008.

\bibitem{duclos2010fast}
G.~Duclos-Cianci and D.~Poulin, ``Fast decoders for topological quantum
  codes,'' \emph{Physical review letters}, vol. 104, no.~5, p. 050504, 2010.

\bibitem{panteleev}
P.~Panteleev and G.~Kalachev, ``{D}egenerate {Q}uantum {LDPC} {C}odes with
  {G}ood {F}inite {L}ength {P}erformance,'' \emph{arXiv preprint
  arXiv:1904.02703}, 2019.

\bibitem{hastings_weight}
M.~B. Hastings, ``Weight reduction for quantum codes,'' \emph{arXiv preprint
  arXiv:1611.03790}, 2016.

\bibitem{wang}
C.~Wang, J.~Harrington, and J.~Preskill, ``Confinement-higgs transition in a
  disordered gauge theory and the accuracy threshold for quantum memory,''
  \emph{Annals of Physics}, vol. 303, no.~1, pp. 31--58, 2003.

\bibitem{chubb}
C.~T. Chubb and S.~T. Flammia, ``Statistical mechanical models for quantum
  codes with correlated noise,'' \emph{arXiv preprint arXiv:1809.10704}, 2018.

\end{thebibliography}

\begin{IEEEbiographynophoto}{Nikolas P. Breuckmann}
	Nikolas P. Breuckmann holds a UCLQ Research Fellowship at University College London. He is interested in quantum information and related fields. He obtained his PhD at RWTH Aachen University working with Prof. Barbara Terhal on quantum fault-tolerance and quantum complexity theory. He has worked in industry at PsiQuantum, a Bay Area based start-up building a silicon-photonics based quantum computer.
\end{IEEEbiographynophoto}

\begin{IEEEbiographynophoto}{Vivien Londe}
	Vivien Londe is a Phd student from Inria Paris and University of Bordeaux. He is interested in constructions of LDPC quantum error correcting codes and their decoding. He is also interested in quantum and classical optimization algorithms.
\end{IEEEbiographynophoto}

\end{document}